\font\smallit=cmti10
\definecolor{mypurple}{RGB}{218,112,214}
\definecolor{mygreen}{RGB}{0,128,0}
\definecolor{transitionpurple}{RGB}{170, 102, 254}
\renewcommand\section{\@startsection {section}{1}{\z@}
{-30pt \@plus -1ex \@minus -.2ex}
{2.3ex \@plus.2ex}
{\normalfont\normalsize\bfseries\boldmath}}
\renewcommand\subsection{\@startsection{subsection}{2}{\z@}
{-3.25ex\@plus -1ex \@minus -.2ex}
{1.5ex \@plus .2ex}
{\normalfont\normalsize\bfseries\boldmath}}
\renewcommand{\@seccntformat}[1]{\csname the#1\endcsname. }
\newtheorem{theorem}{Theorem}
\newtheorem{proposition}{Proposition}
\newtheorem{corollary}{Corollary}
\newtheorem{gameruleset}{Ruleset}
\theoremstyle{definition}
\newtheorem{definition}{Definition}
\begin{document}

\begin{center}
\uppercase
{\bf Transverse Wave: an impartial color-propagation game inspried by Social Influence and Quantum Nim}
\vskip 20pt
{\bf Kyle Burke}\\
{\smallit Department of Computer Science, Plymouth State University, Plymouth, New Hampshire, {\tt \url{https://turing.plymouth.edu/~kgb1013/}}}\\
\vskip 10pt
{\bf Matthew Ferland}\\
{\smallit Department of Computer Science, University of Southern California, Los Angeles, California, United States}\\
{\tt \url{}}\\
{\bf Shang-Hua Teng\footnote{Supported by the Simons Investigator Award for fundamental \& curiosity-driven research and NSF grant CCF1815254.}}\\
{\smallit Department of Computer Science, University of Southern California, Los Angeles, California, United States}\\
{\tt \url{https://viterbi-web.usc.edu/~shanghua/}}\\
\end{center}
\vskip 20pt
\vskip 30pt

\centerline{\bf Abstract}
\noindent
In this paper, we study a colorful, impartial combinatorial game played on a two-dimensional grid, {\sc Transverse Wave}. 
We are drawn to this game because of its apparent simplicity, 
contrasting intractability, and intrinsic connection to two other combinatorial games, one inspired by social influence 
 and another inspired by quantum 
superpositions. 
More precisely, we show that {\sc Transverse Wave}
  is at the intersection of social-influence-inspired
{\sc Friend Circle} and superposition-based {\sc Demi-Quantum Nim}.
{\sc Transverse Wave} is also connected with Schaefer's 
  logic game {\sc Avoid True}.
In addition to analyzing the mathematical structures and computational complexity of {\sc Transverse Wave}, 
we provide a web-based version of the game, playable at
{\tt \url{https://turing.plymouth.edu/~kgb1013/DB/combGames/transverseWave.html.}}
Furthermore, we formulate a basic network-influence inspired game,
called  {\sc Demographic Influence}, which simultaneously 
generalizes {\sc Node-Kyles} and {\sc Demi-Quantum Nim} (which in turn contains as special cases {\sc Nim}, {\sc Avoid True}, and {\sc Transverse Wave}.).
These connections illuminate the \textit{lattice order}, induced by special-case/generalization relationships over mathematical games,
fundamental to both the design and comparative analyses of combinatorial games.

\pagestyle{myheadings}
\thispagestyle{empty}
\baselineskip=12.875pt
\vskip 30pt

\section{The Game Dynamics of {\sc Transverse Wave}}



Elwyn Berlekamp, John Conway, and Richard Guy were known 
  not only for their deep mathematical discoveries 
  but also for their elegant minds. 
Their love of mathematics and their life-long efforts of making mathematics fun and approachable had led to their master piece, ``Winning Ways for your Mathematical Plays," \cite{WinningWays:2001},
a book that has inspired many.
The field that they pioneered---{\em combinatorial game theory}---reflects their personalities. 
Illustrious combinatorial games are usually: 
\begin{itemize}
\item {\bf\em Approachable}:  having a simple, easy to remember and easy to understand ruleset, and 
\item {\bf\em Elegant}: having attractive game boards, yet 
\item {\bf\em  Intriguing \& Challenging}:  having rich strategy structures and requiring  nontrivial efforts to play well (optimally).
\end{itemize}

\subsection{Combinatorial Games and Computational Complexity}

The last property has a characterization using computational complexity.
As the dimensions of games grow---illustrated by popular games from {\sc Go} \cite{LichtensteinSipser:1980} to {\sc Hex} \cite{DBLP:journals/jcss/Schaefer78,Reisch:1981}---the underlying computational problem for determining
the outcome of their game positions and 
for selecting winner moves becomes computational intractable.
Thus, elegant combinatorial games with simple, easy to understand \& remember rulesets yet intractable complexity are the gold standard for combinatorial game design \cite{Eppstein,BurkeFerlandTengQCGT}.

The computational complexity of determining the winnability of a ruleset is a common problem. For games where the winner can be determined algorithmically in polynomial-time, optimal players can be programmed that will run efficiently. In a match with one of these players, there is no need to play the game out to determine whether you can win; just run the algorithm and see what it tells you.

In order to make the competition \emph{interesting}, we want the winnability to be \emph{computationally intractable}, meaning there's no known efficient algorithm to always calculate a position's outcome class. One way to argue that this is the case is to show that the problem of finding the outcome class is hard for a common complexity class.  Many such combinatorial games are found to be \cclass{PSPACE}-hard, meaning that finding a polynomial-time algorithm automatically leads to a  polynomial-time solution to \emph{all} problems in \cclass{PSPACE}.

This argument does not entirely settle the debate about whether a ruleset is "interesting". Indeed, it could be the case that from common starting positions, there is a strategy for the winning player to avoid the computationally-hard positions.  Finding computational-hardness for general positions in a ruleset is only a mininum-requirement.  Improvements can be made by finding hard positions more likely to result after game play from the start.  The best proofs of hardness yield positions that are starting positions themselves\footnote{This requires some variance on these starting positions.  "Empty" or well-structured initial boards do not have a large enough descriptive size to be computationally hard in the expected measures.}.


\subsection{A Colorful Propagation Game over 2D Grids}

In this paper, we consider a simple, colorful, impartial combinatorial game over two-dimensional grids.\footnote{We tested the approachability of this game by explaining its ruleset to a bilingual eight-year-old second-grade student---in Chinese---and she turned to her historian mother and flawlessly explained the ruleset in English.}
We call this game {\sc Transverse Wave}.
We become interested in this game during our study of 
quantum combinatorial games \cite{BurkeFerlandTengQCGT}, particularly, in  our complexity-theoretical analysis of a family of games formulated 
   by superposition of the classical {\sc Nim}.
In addition to Schaefer's logic game, {\sc Avoid True},
{\sc Tranaverse Wave} is also fundamentally connected with several social-influence-inspired combinatorial games.  As we shall show, \ruleset{Transverse Wave}, is \cclass{PSPACE}-hard on some possible starting positions.

\begin{gameruleset}[{\sc Transverse Wave}]
For a pair of integer parameters $m, n > 0$, a game position of {\sc Transverse Wave} is an $m$ by $n$ grid $G$, in which the cells are colored either \texttt{green} or \texttt{purple}.\footnote{or any pair of easily-distinguishable colors.}

For the game instance starting at this position, 
 two players take turns selecting a column of this colorful grid. 
A column $j\in [n]$ is feasible for $G$ if it 
  contains at least one \texttt{green} cell.
The selection of $j$ transforms $G$ into another colorful 
  $m$ by $n$ grid $G\otimes [j]$ by 
  recoloring column $j$ and every row with a \texttt{purple} cell 
  at column $j$ with \texttt{purple}.\footnote{Think of purple paint cascading down column $j$ and inducing 
    a purple "transverse wave" whenever the propagation 
    goes through an already-purple cell.}
In the normal-play convention, the player without a feasible 
  move loses the game.
\end{gameruleset}

\begin{figure}[h!]
\begin{tikzpicture}[box/.style={rectangle,draw=black,thick, minimum size=1cm}]

  \foreach \x in {0,1,...,3}{
      \foreach \y in {0,1,...,4}
          \node[box, fill=mygreen] at (\x,\y){};
  }
  
  \node at (0,5) {0};
  \node at (1,5) {1};
  \node[circle] at (2,5) {2};
  \node at (3,5) {3};
  
  \node[box,fill=mypurple] at (0,4){}; 
  \node[box,fill=mypurple] at (1,3){}; 
  \node[box,fill=mypurple] at (2,3){}; 
  \node[box,fill=mypurple] at (1,2){}; 
  \node[box,fill=mypurple] at (1,1){}; 
  \node[box,fill=mypurple] at (2,1){}; 
  \node[box,fill=mypurple] at (3,1){}; 
  \node[box,fill=mypurple] at (0,0){};  
  \node[box,fill=mypurple] at (1,0){};  
  \node[box,fill=mypurple] at (3,0){};  
  
  \node at (1.5, -1) {a};

\end{tikzpicture}
\begin{tikzpicture}[box/.style={rectangle,draw=black,thick, minimum size=1cm}]

  \foreach \x in {0,1,...,3}{
      \foreach \y in {0,1,...,4}
          \node[box, fill=mygreen] at (\x,\y){};
  }
  \node at (0,5) {0};
  \node at (1,5) {1};
  \node[draw, circle] at (2,5) {\textbf{2}};
  \node at (3,5) {3};
  
  \node[box,fill=mypurple] at (0,4){}; 
  \node[box,fill=transitionpurple] at (2,4){};
  \node[box,fill=transitionpurple] at (0,3){};
  \node[box,fill=mypurple] at (1,3){}; 
  \node[box,fill=mypurple] at (2,3){};
  \node[box,fill=transitionpurple] at (3,3){};
  \node[box,fill=mypurple] at (1,2){};
  \node[box,fill=transitionpurple] at (2,2){};
  \node[box,fill=transitionpurple] at (0,1){};
  \node[box,fill=mypurple] at (1,1){}; 
  \node[box,fill=mypurple] at (2,1){}; 
  \node[box,fill=mypurple] at (3,1){}; 
  \node[box,fill=mypurple] at (0,0){};  
  \node[box,fill=mypurple] at (1,0){}; 
  \node[box,fill=transitionpurple] at (2,0){};
  \node[box,fill=mypurple] at (3,0){};  
  
  \node at (1.5, -1) {b};

\end{tikzpicture}
\begin{tikzpicture}[box/.style={rectangle,draw=black,thick, minimum size=1cm}]

  \foreach \x in {0,1,...,3}{
      \foreach \y in {0,1,...,4}
          \node[box, fill=mygreen] at (\x,\y){};
  }
  \node at (0,5) {0};
  \node at (1,5) {1};
  \node at (2,5) {2};
  \node at (3,5) {3};
  
  \node[box,fill=mypurple] at (0,4){}; 
  \node[box,fill=mypurple] at (2,4){};
  \node[box,fill=mypurple] at (0,3){};
  \node[box,fill=mypurple] at (1,3){}; 
  \node[box,fill=mypurple] at (2,3){};
  \node[box,fill=mypurple] at (3,3){};
  \node[box,fill=mypurple] at (1,2){};
  \node[box,fill=mypurple] at (2,2){};
  \node[box,fill=mypurple] at (0,1){};
  \node[box,fill=mypurple] at (1,1){}; 
  \node[box,fill=mypurple] at (2,1){}; 
  \node[box,fill=mypurple] at (3,1){}; 
  \node[box,fill=mypurple] at (0,0){};  
  \node[box,fill=mypurple] at (1,0){}; 
  \node[box,fill=mypurple] at (2,0){};
  \node[box,fill=mypurple] at (3,0){};  
  
  \node at (1.5, -1) {c}; 

\end{tikzpicture}
\caption{An example move for \ruleset{Transverse Wave}. \textbf{a}: The player chooses column 2, which has a green cell.  \textbf{b}: Indigo cells denote those that will become purple.  These include the previously-green cells in column 2 as well as the green cells in rows where column 2 had purple cells.  \textbf{c}: The new position after all cells are changed to be purple.}
\label{fig:TransverseWaveSampleMove}
\end{figure}
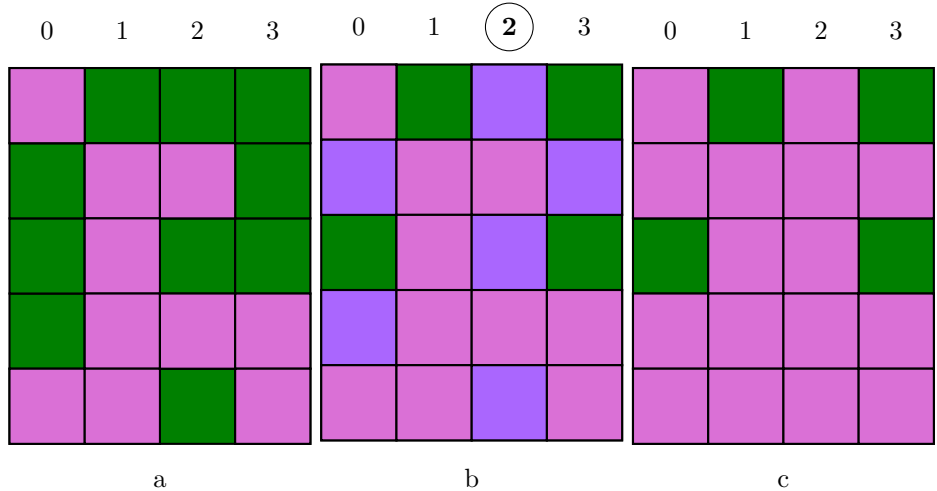

Note that purple cells cannot change to green, and each move increases the number of purple cells in the grid by at least one.
In fact,  each move turns at least one column into one with all purple cells.
Thus, any position with dimension $m$ by $n$ must end in at most $n$ turns, and height of {\sc Transverse Wave}'s game tree is 
at most $n$.
Consequently, {\sc Transverse Wave} is solvable in polynomial space.

Note also that due to the \textit{transverse wave}, the selection of 
column $j$ could make some other feasible columns infeasible.
In Section \ref{Sec:GraphStructure},
  we will show that the interaction among 
  columns introduce sufficiently rich mathematical structures for
{\sc Transverse Wave} to efficiently encode 
  any PSPACE-complete games such as 
  {\sc Hex}
\cite{Gale:1979,NashHex,EvenTarjanHex, Reisch:1981},
  {\sc Avoid True} \cite{DBLP:journals/jcss/Schaefer78},  {\sc Node Kalyes} \cite{DBLP:journals/jcss/Schaefer78}, {\sc Go} \cite{LichtensteinSipser:1980}, {\sc Geography}
\cite{LichtensteinSipser:1980}.
In other words, {\sc Transverse Wave} is a PSPACE-complete impartial game.  

We have implemented {\sc Transverse Wave} 
in HTML/Javascript.\footnote{Web version: \url{https://turing.plymouth.edu/~kgb1013/DB/combGames/transverseWave.html}}

\subsection{Dual Logical Interpretations of {\sc Transverse Wave}}

{\sc Transverse Wave} uses only two colors; a position can be expressed naturally with a Boolean matrix.
Furthermore, making a move can 
  be neatly captured by basic Boolean functions.
Let consider the following two combinatorial games over Boolean matrices that are isomorphic to {\sc Transverse Wave}. 
While these logic associations are straightforward, 
they set up stimulating connections to combinatorial games inspired by social influence  and quantum superposition.

We use the following standard notation for matrices: 
For an $m\times n$ matrix $\mathbf{A}$, $i\in [m]$ and $j\in [n]$, let  
$\mathbf{A}[i,:]$, $\mathbf{A}[:,j]$, $\mathbf{A}[i,j]$ denote, respectively, the $i^{th}$ row,  $j^{th}$ column, and the $(i,j)^{th}$ entry in $\mathbf{A}$.

\begin{gameruleset}[{\sc Crosswise AND}]
For integers $m, n > 0$, {\sc Crosswise AND} plays on an $m \times n$ Boolean matrix $\mathbf{B}$.

During the game, two players alternatively select $j \in [n]$, where $j$ is feasible for $\mathbf{B}$ if  $\mathbf{B}[:,j] \neq \vec{0}$. 
The move with selection $j$ then changes the 
  Boolean matrix to one as the following: $\forall i \neq j \in [m]$, 
the $i^{th}$ row takes a component-wise AND with its $j^{th}$ bit, $\mathbf{B}[i,j]$, then the $(i,j)^{th}$ entry is set to 0. Under normal play, the player with no feasible column to choose loses the game.
\end{gameruleset}

By mapping purple cell to Boolean 0 (i.e., \texttt{false})
  and green cell to Boolean 1 (i.e.,\texttt{true}), 
  and purple transverse wave to crosswise-AND-with-zero, we have:
  
\begin{proposition}
{\sc Transverse Wave} and {\sc Crosswise AND} are isomorphic games.
\end{proposition}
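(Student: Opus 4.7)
The plan is to exhibit an explicit bijection between positions and then check that it intertwines the feasibility predicates and the move operations; once that is done, the two game trees are identical under the normal-play convention, so the two rulesets are isomorphic as impartial games.

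First, I would define the position map $\phi$ sending an $m\times n$ colored grid $G$ to the $m\times n$ Boolean matrix $\mathbf{B}=\phi(G)$ with $\mathbf{B}[i,j]=1$ when the cell $(i,j)$ is green and $\mathbf{B}[i,j]=0$ when it is purple. This is evidently a bijection between \ruleset{Transverse Wave} positions and \ruleset{Crosswise AND} positions of the same shape. Next I would verify preservation of feasibility: column $j$ contains at least one green cell in $G$ if and only if $\mathbf{B}[:,j]$ has at least one $1$-entry, i.e.\ $\mathbf{B}[:,j]\neq\vec 0$, so a column is legal for one game exactly when it is legal for the other.

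The only substantive step is showing that $\phi(G\otimes[j])$ equals the matrix $\mathbf{B}'$ produced by the \ruleset{Crosswise AND} move on column $j$ applied to $\mathbf{B}=\phi(G)$. I would argue this row by row. If $G[i,j]$ is purple, then $\mathbf{B}[i,j]=0$, so the componentwise AND of row $i$ with $\mathbf{B}[i,j]$ wipes row $i$ to $\vec 0$; this matches the transverse wave, which recolors the whole row purple because column $j$ had a purple cell in that row. If $G[i,j]$ is green, then $\mathbf{B}[i,j]=1$ and the AND leaves row $i$ untouched; the subsequent zeroing of the $(i,j)$ entry corresponds to recoloring only the single cell $(i,j)$ to purple, which is precisely what \ruleset{Transverse Wave} does in that row. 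In every case column $j$ ends up identically zero, matching the fact that $G\otimes[j]$ has column $j$ fully purple.

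Finally, terminal positions coincide under $\phi$ (no feasible column on either side means the board has no green cell, equivalently no nonzero column), and both games use the normal convention, so the loser in one game is the loser in the other. I do not anticipate any real obstacle: the argument is a direct unpacking of the two rulesets. The one point that deserves care is the order of operations inside the \ruleset{Crosswise AND} move—the rowwise ANDs must be computed using the \emph{original} values of $\mathbf{B}[i,j]$, and only afterwards is column $j$ zeroed—matching the intuitive picture that the purple wave propagates from the pre-move configuration before column $j$ is itself overwritten.
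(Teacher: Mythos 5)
Your proposal is correct and takes essentially the same approach as the paper, which simply declares the bijection (purple $\mapsto 0$, green $\mapsto 1$, purple transverse wave $\mapsto$ crosswise-AND-with-zero) and asserts the isomorphism without further verification. Your row-by-row check of the move correspondence and the feasibility/terminal-position matching just makes explicit what the paper leaves implicit.
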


\begin{gameruleset}[{\sc Crosswise OR}]
For integer parameters $m, n > 0$, {\sc Crosswise OR} plays on an $m \times n$ Boolean matrix $\mathbf{B}$.

During the game, two players alternatively 
  select $j \in [n]$, where $j$ is feasible for $\mathbf{B}$ if  
  $\mathbf{B}[:,j] \neq \vec{1}$. 
The move with selection $j$ then changes the Boolean matrix 
  to one as the following: $\forall i \neq j \in [m]$, 
the $i^{th}$ row takes a component-wise OR with its $j^{th}$ bit, 
$\mathbf{B}[i,j]$, then the $(i,j)^{th}$ entry is set to 1. 
Under normal play, the player with no feasible column to choose loses the game.
\end{gameruleset}

By mapping purple cell to Boolean 1,
  and green cell to Boolean 0, and purple transverse wave to crosswise-OR-with-one, we have:
  
\begin{proposition}
{\sc Transverse Wave} and {\sc Crosswise OR} are isomorphic games.
\end{proposition}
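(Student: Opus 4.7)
The plan is to exhibit an explicit bijection between the state spaces of \textsc{Transverse Wave} and \textsc{Crosswise OR} that preserves both the feasibility predicate and the move-transformation, and then to conclude that the two impartial games have identical game trees. Concretely, I would define the encoding $\phi$ that sends an $m \times n$ colored grid $G$ to the Boolean matrix $\mathbf{B}$ with $\mathbf{B}[i,j] = 1$ precisely when cell $(i,j)$ of $G$ is purple (and $0$ when green). This is a clear bijection on positions, so the only work is checking that moves commute with $\phi$.

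I would then verify the two matching conditions in sequence. For feasibility: column $j$ is feasible in $G$ iff it contains at least one green cell iff $\mathbf{B}[:,j]$ contains a $0$ entry iff $\mathbf{B}[:,j] \neq \vec{1}$, which is exactly the \textsc{Crosswise OR} condition. For the move: the \textsc{Transverse Wave} move on column $j$ paints column $j$ purple (so $\phi$ sets $\mathbf{B}[:,j]$ to all ones), and additionally paints every row $i$ with $G[i,j]$ already purple (i.e.\ $\mathbf{B}[i,j]=1$) entirely purple, leaving the other rows unchanged. Under $\phi$, this is precisely the component-wise OR of row $i$ with the bit $\mathbf{B}[i,j]$ specified by \textsc{Crosswise OR}: if that bit is $1$ the row becomes all ones, and if it is $0$ the row is left alone; the final stipulation that the $(i,j)$ entry be set to $1$ is then automatic.

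A cleaner alternative, which I would likely present as the actual proof, is to compose the already-established isomorphism between \textsc{Transverse Wave} and \textsc{Crosswise AND} with the obvious bitwise-complement isomorphism between \textsc{Crosswise AND} and \textsc{Crosswise OR}: negation exchanges $\vec{0}$ with $\vec{1}$, AND with OR, and ``set entry to $0$'' with ``set entry to $1$'' (De Morgan), so it carries one ruleset verbatim into the other. The only real obstacle in either approach is bookkeeping around the selected row/column itself---\textsc{Transverse Wave} does not single out index $i=j$, whereas \textsc{Crosswise OR} does---but a short case split on whether $\mathbf{B}[i,j]$ is $0$ or $1$ before the move shows both prescriptions agree, so no genuine difficulty arises.
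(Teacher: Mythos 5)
Your proposal is correct. Your first argument is exactly the paper's proof: the paper establishes this proposition with a single sentence giving the same dictionary (purple $\mapsto 1$, green $\mapsto 0$, purple transverse wave $\mapsto$ crosswise-OR-with-one) and leaves the feasibility and move checks implicit; you have merely written them out. The alternative you say you would actually present---composing Proposition~1 (\textsc{Transverse Wave} $\cong$ \textsc{Crosswise AND}) with the bitwise-complement isomorphism between \textsc{Crosswise AND} and \textsc{Crosswise OR}---is a genuinely different route from the paper's. It buys brevity and makes the duality explicit (negation swaps $\vec{0}$ with $\vec{1}$, AND with OR, and ``set to $0$'' with ``set to $1$''), at the price of depending on Proposition~1; the paper instead gives the two propositions parallel, self-contained one-line proofs. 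One caution about your closing remark: the ``$\forall i \neq j$'' in the \textsc{Crosswise OR} (and \textsc{Crosswise AND}) ruleset is a slip in the paper's phrasing---$i$ indexes rows and $j$ indexes columns, so the comparison is not even type-correct when $m \neq n$. Read literally, with row $j$ exempt from both the OR step and the set-to-$1$ step, the isomorphism would actually fail whenever $j \le m$ and $\mathbf{B}[j,j] = 0$ (column $j$ would remain feasible in \textsc{Crosswise OR} after the move but not in \textsc{Transverse Wave}), and no case split on $\mathbf{B}[i,j]$ repairs that literal reading. The correct resolution is the charitable reading in which every row is processed; under it your verification goes through verbatim, so this is a defect of the ruleset's statement rather than a gap in your proof.
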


\subsection{{\sc Transverse Wave} Game Values}
\label{CGT Values}

As we will discuss later, in general, {\sc Transverse Wave} is PSPACE-complete. 
So, we have no hope for an efficient complete characterization for 
the game values. 
In this subsection, we show that the game values for a specific case of \ruleset{Transverse Wave}, where each column with has no more than 1 purple tile, 
can be fully characterized.  Because we are able to classify the moves into two types of options, and by extension can define the game by two parameters, a fun and interesting nimber version of Pascal-Like triangle arises.

\begin{theorem}[Pascal-Like Nimber Triangle]
Let $p$ be the number of rows with at least 1 purple tile and $k$ be the number of rows with an odd number of purple tiles, and $q = 0$ if there are an even number of columns with only green tiles, and 1 otherwise.  We define $G'$ to be

$G' = \begin{cases}
0, & (k \text{ is even and } p > 2k) \text{ or } (k \text{ is odd and }p < 2k)\\
*, & (k \text{ is even and } p < 2k) \text{ or } (k \text{ is odd and }p > 2k)\\
*2, & p = 2k
\end{cases}$ 

If $G$ is a \ruleset{Transverse Wave} position where for every column we can select, there is no more than one purple tile (discounting rows with only purple tiles), then, $G = G' + \ast q$
\end{theorem}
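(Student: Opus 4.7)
The plan is to prove the theorem by strong induction on the number of feasible columns of $G$. Every move turns the selected column all-purple and thus strictly reduces this count, so the induction terminates; the base case of zero feasible columns has game value $0$ and is easily checked to match the formula.

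For the inductive step, I first confirm that the structural hypothesis, namely that every feasible column has at most one purple tile in non-all-purple rows, is invariant under legal moves. This is direct: a move on an all-green column introduces new purples only in that column itself, which becomes infeasible immediately; a move on a 1-purple column at row $i$ makes row $i$ all-purple (so its purple contributions to other columns become discounted) while leaving the rest of the grid untouched. Consequently the inductive hypothesis applies to every option position.

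Next I classify the options into two types and compute their $(p,k,q)$-transforms. A type-(a) move on an all-green column flips $q$ and, because the newly purpled column adds a purple to every row, shifts the row-parameters by $(p,k)\mapsto(m,\, m-k)$. A type-(b) move on a 1-purple column whose purple lies at row $i$ turns row $i$ all-purple and flips the parity of every surviving row's purple count (because every other row gains a purple in the selected column); in addition, the other columns whose single purple lay in row $i$ become ``all-green in the discounted sense,'' contributing to the new $g$ and possibly flipping $q$. Substituting these transforms into the inductive formula yields the nim-value of each option in closed form.

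The final step is to verify that the mex of the option nim-values equals $G'+*q$ across every combination of the three $G'$-cases and the two $q$-values. The main obstacle is the bookkeeping of this mex computation: there are options of both types, type-(b) options split further according to the parity of $r_i$, and the case definition of $G'$ produces different targets depending on which side of the diagonal $p=2k$ the position lies. The $p=2k$ boundary, which yields $*2$, is especially delicate: one must show that the option set realises nim-values $0$ and $*$ but never $*2$, so that the mex is exactly $2$; off the diagonal one must show symmetrically that $*2$ never appears and that the mex is forced to $0$ or $*$ according to the parity of $k$. These identities are the Pascal-like local recurrences advertised by the theorem's name, and once verified on the neighbouring $(p,k)$ cells they close the induction.
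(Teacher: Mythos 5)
Your proposal follows the same route as the paper's proof: split off the all-green columns as $\ast$-summands, classify the remaining moves by the parity of the row containing the selected column's purple tile, and close a Pascal-like recursion by a mex computation. The fatal problem is in your parameter transforms. You claim that a move on a 1-purple column ``flips the parity of every surviving row's purple count,'' and that a move on an all-green column sends $(p,k)\mapsto(m,m-k)$, because every row gains a purple cell in the selected column. But those new purple cells all lie in a column that is now entirely purple and hence inert: it can never be selected again, and waves are triggered only by purple cells sitting in the column currently being selected, so these cells have no influence on the remaining play. For the theorem's statistics to obey any recursion consistent with its base cases, purple tiles in all-purple \emph{columns} must be discounted exactly as those in all-purple \emph{rows} are --- something your own argument half-acknowledges when it calls the columns meeting row $i$ ``all-green in the discounted sense'' even though each of them still literally contains a purple cell, so your bookkeeping is internally inconsistent. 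With the correct discounting, surviving rows keep their parities, and the transforms are: odd-row move $(p,k,q)\mapsto(p-1,\,k-1,\,q+r_i-1)$, even-row move $(p,k,q)\mapsto(p-1,\,k,\,q+r_i-1)$, all-green-column move $(p,k,q)\mapsto(p,k,q-1)$. This is exactly what produces the paper's ``above-left'' and ``above-right plus $\ast$'' option structure.

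Because the transforms are wrong, the concluding mex verification (which you defer rather than carry out) cannot succeed. Concretely, in the paper's shorthand, consider the position $(01)(23)$ on four columns: here $(p,k,q)=(2,0,0)$ and the value is $0$, since each of the four moves leaves exactly one reply. This position is the option of $(0)(12)(34)$ on five columns, where $(p,k,q)=(3,1,0)$, reached by playing column $0$, whose purple lies in the unique odd row. Your transform labels this option $(p,k)=(2,2)$ --- rows $(12)$ and $(34)$ each ``gain'' a purple in column $0$ --- so your inductive hypothesis assigns it value $\ast$ rather than $0$, and the induction breaks. The all-green-column transform fails similarly: the position $(0)(1)$ plus one all-green column has value $0$, and the move on the all-green column leads to $(0)(1)$, which has value $\ast$; the correct transform $(2,2,1)\mapsto(2,2,0)$ predicts $\ast$, while yours, $(p,k)\mapsto(m,m-k)=(2,0)$ with $q$ flipped to $0$, predicts $0$. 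Had you executed the final mex step with your transforms on even these small cases, the inconsistency would have surfaced immediately.
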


\begin{proof}
We will call the rows with an odd number of purple tiles the \textit{odd parity} rows and the ones with an even number of purple tiles the \textit{even parity} rows. We claim that $G'$ is the game value without including the columns with only green tiles as options, and $G$ to be the game value with them.

Assuming $G'$ is as we claim, it isn't difficult to see that $G$ is correct. For each all-green column, they can be selected at any time, and don't effect the game at all (since they can always be selected). Thus, each all-green column is just a $\ast$, so $G = G'$ if there are an even number of all-green columns and $G = G' + \ast$ if there are an odd number of all-green columns.

We have an illustrative triangle of cases of $G'$ of up to 8 rows in Figure \ref{fig:AvoidTrueTriangle}. 

\begin{figure}[h!]

\newcommand{\pascalAngle}{0}

\begin{center}\begin{tikzpicture}[node distance = .05cm]

  \node (row8) at (0,0) {$8$};
  \node (row7) at (0,.5) {$7$};
  \node (row6) at (0,1) {$6$};
  \node (row5) at (0,1.5) {$5$};
  \node (row4) at (0,2) {$4$};
  \node (row3) at (0,2.5) {$3$};
  \node (row2) at (0, 3) {$2$};
  \node (row1) at (0, 3.5) {$1$};
  \node at (0, 4) {\# rows};
  
  \node at (4, 5.5) {\# odds};

  \node[rotate=\pascalAngle] (odds0) at (4.5, 5) {0};
  \node[rotate=\pascalAngle] (odds1) at (4.75, 4.5) {1};
  \node[rotate=\pascalAngle] (odds2) at (5, 4) {2};
  \node[rotate=\pascalAngle] (odds3) at (5.25, 3.5) {3};
  \node[rotate=\pascalAngle] (odds4) at (5.5, 3) {4};
  \node[rotate=\pascalAngle] (odds5) at (5.75, 2.5) {5};
  \node[rotate=\pascalAngle] (odds6) at (6, 2) {6};
  \node[rotate=\pascalAngle] (odds7) at (6.25, 1.5) {7};
  \node[rotate=\pascalAngle] (odds8) at (6.5, 1) {8};
  
  \node (row1A) at (3.75, 3.5) {$0$};
  \node (odds0Start) at (3.9, 3.8) {};
  \node (row1Start) [left=of row1A] {};
  \node  at (4.25, 3.5) {$0$};
  \node (odds1Start) at (4.4, 3.8) {}; 
  
  \node (row2A) at (3.5, 3) {$0$};
  \node (row2Start) [left=of row2A] {};
  \node at (4, 3) {$*2$};
  \node at (4.5, 3) {$*$};
  \node (odds2Start) at (4.65, 3.3) {};
  
  \node (row3A) at (3.25, 2.5) {$0$};
  \node (row3Start) [left=of row3A] {};
  \node at (3.75, 2.5) {$*$};
  \node at (4.25, 2.5) {$*$};
  \node at (4.75, 2.5) {$0$};
  \node (odds3Start) at (4.9, 2.8) {};
  
  \node (row4A) at (3, 2) {$0$};
  \node (row4Start) [left=of row4A] {};
  \node at (3.5, 2) {$*$};
  \node at (4, 2) {$*2$};
  \node at (4.5, 2) {$0$};
  \node at (5, 2) {$*$};
  \node (odds4Start) at (5.15, 2.3) {};
  
  \node (row5A) at (2.75, 1.5) {$0$};
  \node (row5Start) [left=of row5A] {};
  \node at (3.25, 1.5) {$*$};
  \node at (3.75, 1.5) {$0$};
  \node at (4.25, 1.5) {$0$};
  \node at (4.75, 1.5) {$*$};
  \node at (5.25, 1.5) {$0$};
  \node (odds5Start) at (5.4, 1.8) {};
  
  \node (row6A) at (2.5, 1) {$0$};
  \node (row6Start) [left=of row6A] {};
  \node at (3, 1) {$*$};
  \node at (3.5, 1) {$0$};
  \node at (4, 1) {$*2$};
  \node at (4.5, 1) {$*$};
  \node at (5, 1) {$0$};
  \node at (5.5, 1) {$*$};
  \node (odds6Start) at (5.65, 1.3) {};
  
  \node (row7A) at (2.25, .5) {$0$};
  \node (row7Start) [left=of row7A] {};
  \node at (2.75, .5) {$*$};
  \node at (3.25, .5) {$0$};
  \node at (3.75, .5) {$*$};
  \node at (4.25, .5) {$*$};
  \node at (4.75, .5) {$0$};
  \node at (5.25, .5) {$*$};
  \node at (5.75, .5) {$0$};
  \node (odds7Start) at (5.90, .8) {};
  
  \node (row8A) at (2, 0) {$0$};
  \node (row8Start) [left=of row8A] {};
  \node at (2.5, 0) {$*$};
  \node at (3, 0) {$0$};
  \node at (3.5, 0) {$*$};
  \node at (4, 0) {$*2$};
  \node at (4.5, 0) {$0$};
  \node at (5, 0) {$*$};
  \node at (5.5, 0) {$0$};
  \node at (6, 0) {$*$};
  \node (odds8Start) at (6.15, .3) {};
  
  \draw[->] 
    (row1) edge (row1Start)
    (row2) edge (row2Start)
    (row3) edge (row3Start)
    (row4) edge (row4Start)
    (row5) edge (row5Start)
    (row6) edge (row6Start)
    (row7) edge (row7Start)
    (row8) edge (row8Start)
    
    (odds0) edge (odds0Start)
    (odds1) edge (odds1Start)
    (odds2) edge (odds2Start)
    (odds3) edge (odds3Start)
    (odds4) edge (odds4Start)
    (odds5) edge (odds5Start)
    (odds6) edge (odds6Start)
    (odds7) edge (odds7Start)
    (odds8) edge (odds8Start)
    ;

\end{tikzpicture}\end{center}
\caption{A Pascal-Like Nimber Triangle of the values of \ruleset{Transverse Wave} positions with no overlapping literals.}
\label{fig:AvoidTrueTriangle}
\end{figure}

If the player chooses a column where the purple tile is in an odd parity row, then an even number of other rows share that single purple cell. Later selecting any of those rows makes no other change to the game state, so they can each be considered to contribute (additively) a $\ast$, for a total of $2k\ast = 0$. Then the resulting option's value is just the same as one with $p-1$ rows and $k-1$ rows with odd parity. This just the value above and left in the triangle previously referenced.

If the player chooses a column where the purple tile is in a row with even parity, then an even number of other rows also have that single purple cell.  Thus, the result is an odd number of $\ast$ is added $(2k+1)\ast = \ast$ added to the option.  Thus, the resulting game is the value above and right in the table (the same number of odd rows and one less row overall) plus $\ast$.

\begin{table}[h!]
\begin{tabular}{r | c | c | l}
  Case & Above left & Above right & Value\\ \hline
  a & & &  0\\ \hline
  b & & 0 &  0\\ \hline
  c & 0 & & $*$ \\ \hline
  d & 0 & 0 & $*2$ \\ \hline
  e & 0 & $*$ & $*$ \\ \hline
  f & 0 & $*2$ & $*$ \\ \hline
  g & $*$ & & 0 \\ \hline 
  h & $*$ & 0 & 0 \\ \hline
  i & $*$ & $*$ & $*2$ \\ \hline
  j & $*$ & $*2$ & 0 \\ \hline
  k & $*2$ & 0 & 0 \\ \hline 
  l & $*2$ & $*$ & $*$ \\ \hline
\end{tabular}
\caption{Values for a game with a given position based on what is above and left in the triangle and what is above and right}
\label{table:TriangleTable}
\end{table}

By inspection, note that Table \ref{table:TriangleTable} represents the correct value for each possible parents in the game tree. 

Now, we have 5 cases which invoke those table cases.
\begin{enumerate}
  \item $k$ even and $p > 2k$: case a, b, h, or j (thus value 0)
  \item $k$ odd and $p < 2k$: case a, g, h, or k (thus value 0)
  \item $k$ even and $p < 2k$: case c, e, or l (thus value $\ast$)
  \item $k$ odd and $p > 2k$: case e or f (thus value $\ast$)
  \item $p = 2k$: case d or i (thus in $\ast 2$)
\end{enumerate}

Let's prove the correctness of these cases. Clearly when the game is a single row it holds by inspection.

We can assume by induction that it holds row $p - 1$.

Now, let's show that it holds, for $k$ even and $p > 2k$. This is either it is the base case (case $a$), because there are 0 odd rows (case $b$), or $k$ some other even number less than $\frac{1}{2}p$. It's left parent will have $k - 1$ with $p - 1 > 2(k - 1)$, which, by induction, must be $\ast$. It's right parent has $p - 1 \geq 2k$, and thus is either 0 or $\ast 2$. Thus, it must be either case $h$ or $j$.

Now, let's look at $k$ odd and $p < 2k$. Then, either it is the base case (case $a$), $k = p$ (then it must be case $g$, since it only has a single left parent with even $k$), or it is some other odd $k$ such that $p < 2k$. Then, it has a right parent with odd $k$ and $p < 2k$, which is 0. And the left parent has even $k$ and $p - 1 \leq 2(k - 1)$, thus a left parent of either $\ast$ or $\ast 2$, which is case $h$ and $k$, respectively.

If $k$ is even and $p < 2k$, then either $k = p$, in which case it has a single left parent which is in case 2, which is case $c$, or it is some other even $k$ with $p < 2k$. In that case, the top right parent will have the same $k$ and thus be case 3 and thus $\ast$. The top left parent will have $k - 1$ and $n - 1 \leq 2(k-1)$, and thus be 0 or $\ast 2$. This is $e$ or $l$.

If $k$ is odd and $p - 1 > 2(k - 1)$, then we know the top left parent has $p > 2(k - 1)$ which is 0 (since it is case 1). The top right parent has $p - 1 \geq 2k$, and is thus 0 or $\ast 2$, and is thus $e$ or $f$.

Finally, if $p = 2k$, then $k$ can either be odd or even. If it is odd, then the left parent is has even $k$ and $p-1 > 2(k-1)$ and is thus 0, and the right parent has even $k$ and $p < 2k$ and is thus 0, putting this in case $d$. If $k$ is even, then the left parent is odd and thus $\ast$, and the right parent is even and thus $\ast$, putting us in case $i$.
\end{proof}

We also have some game miscellaneous game values, and have examples of up to $\ast 7$, as shown in in Table \ref{table:values}.

\begin{table}[h!]
\begin{tabular}{r | c | c | l}
  Nimber & Rows (shorthand) & Other Columns \\ \hline
  0 & (0) & \\ \hline
  $\ast$ & (0) (01) & \\ \hline
  $\ast 2$ & (01) (2) & \\ \hline
  $\ast 3$ & (01) (2) & 3 \\ \hline
  $\ast 4$ & (01) (234) (035) & \\ \hline
  $\ast 5$ & (01) (234) (035) & 6 \\ \hline
  $\ast 6$ & (012) (034) (0156) (2578) & \\ \hline 
  $\ast 7$ & (012) (034) (0156) (2578) & 9\\ \hline
\end{tabular}
\caption{Instances of values up to $\ast 7$. The shorthand uses parenthesis to indicated rows and numbers to indicate purple columns So, for example, the $\ast 3$ case the first row with columns 0 and 1 colored purple, row two with column 2 colored purple, and with another column 3 which isn't colored purple in any row}
\label{table:values}
\end{table}

Note that all of these CGT results can be extended to the related games that are exact embeddings of this, as we will discuss later, of \ruleset{Avoid True} and \ruleset{Demi-Quantum Boolean Nim}.

Although we can only characterize \ruleset{Transverse Wave} in very special cases, 
the Pascal-like formation of these game values provides us a glimpse of 
a potential elegant structure.
Something interesting to explore in the future is whether one can cleanly characterize the game values when the game is restricted to have only two purple tiles in each column. 
And then if so, one would like to see how large of the parameter one can characterize
and when we encounter intractability.
Answering these questions can also tell us about the values for the other related games as well.

\section{Connection to a Social-Influence-Inspired Combinatorial Game}

Because mathematical principles are ubiquitous, combinatorial game theory is a field intersecting many disciplines. Combinatorial games have drawn inspiration widely from logic to topology, from military combat to social sciences, from graph theory to game theory. Because the field cherishes challenging games with simple rulesets and elegant game boards, combinatorial game design is also a distillation process, aiming to derive elementary moves and transitions in order capture the essence of complex phenomena that inspire the game designers.

In this and the next sections, we discuss two games whose intersection contains {\sc Transverse Wave}.
While they have found a common ground, these games rooted from different research fields. 
The first game, called {\sc Friend Circle} is 
  motivated by viral marketing
\cite{RichardsonDomingos,KKT,CTZGraphBasisSocialInfluence}, while the second, called {\sc Demi-Quantum Nim}, was inspired by quantum superpositions \cite{goff2006quantum,dorbec2017toward,BurkeFerlandTengQCGT}.
In this Section, we first focus on {\sc Friend Circle}.

\newcommand{\fcTrue}{\text{\textbf{t}}}
\newcommand{\fcFalse}{\text{\textbf{f}}}

\subsection{Viral-Marketing Broadcasting: {\sc Friend Circle}}

In many ways, viral marketing itself is a game of financial optimization.
It aims to convert more people, through network-propagation-based social influence, by strategically investing in a seed group of people \cite{RichardsonDomingos,KKT}.
In the following combinatorial game inspired by viral marketing,
 {\sc Friend Circle}, we use a high-level perspective 
 of social influence and social networks. 
Consider a social-network universe like Facebook, 
  where people have their ``circles'' of friends.
They can {\em broadcast} to all people in their friend circles (with) a single post), or they can individually interact with some of
their friends (via various personalized means).
We will use individual interaction to set up the game position.
In  game {\sc Friend Circle}, only broadcast-type of 
  interaction is exploited.  
We will use the following traditional graph-theory notation:
In an undirected graph $G=(V,E)$, for each $v\in V$, 
  the neighborhood of
$v$ in $G$ is $N_G(v) = \{u\ |\ (u,v) \in E\}$.

\begin{gameruleset}[{\sc Friend Circle}]
For a ground set $V = [n]$ (of $n$ people), a \ruleset{Friend Circle} position is defined by a triple $(G, S, w)$, where
\begin{itemize}
  \item $G = (V, E)$ is an undirected graph.  An edge between two vertices represents a friendship between those people.
  \item $S \subset V$ denotes the seed set, and
  \item $w : E \rightarrow \{\fcFalse, \fcTrue \}$ (false and true) represents whether those friends have already spoken about the target product (with at least one recommending it to the other).
\end{itemize}

To choose their move, a player--a viral marketing agent--picks a person from the seed set, $v \in S$, such that $\exists\ e = (v, x) \in E$ where $w(e) = \fcFalse$.  This represents choosing someone who hasn't spoken about the product to at least one of their friends.

The result of this move is a new position $(G, S, w')$, where $w'$ is the same as $w$ except that $\forall x \in N_G(v):$
\begin{itemize}
    \item $w'((v, x)) = \fcTrue$, and
    \item if $w((v, x)) = \fcTrue$ then $\forall y \in N_G(x): w'((x, y)) = \fcTrue$.
\end{itemize}



\end{gameruleset}

An example of a \ruleset{Friend Circle} move is shown in Figure \ref{fig:friendCircle}.

\newcommand{\trueEdge}{$t$}
\newcommand{\falseEdge}{$f$}
\newcommand{\newTrueEdge}{$\mathbf{t}$}
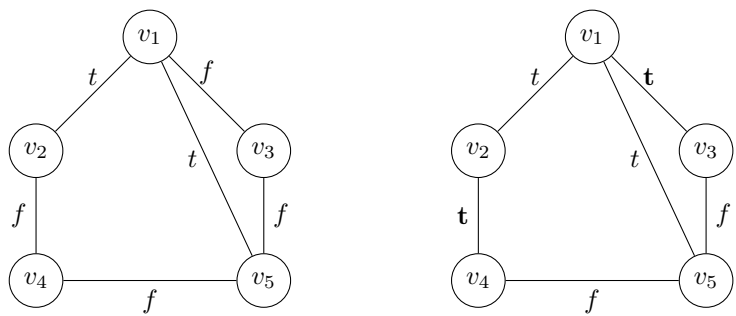
\begin{figure}[h!]
\begin{center}\begin{tikzpicture}

  \node[draw, circle] (v1) {$v_1$};
  \node[draw, circle] (v2) [below left=of v1] {$v_2$};
  \node[draw, circle] (v3) [below right=of v1] {$v_3$};
  \node[draw, circle] (v4) [below=of v2] {$v_4$};
  \node[draw, circle] (v5) [below=of v3] {$v_5$};
  
  \draw[-]
    (v1) edge node [above] {\trueEdge} (v2)
    (v1) edge node [above] {\falseEdge} (v3)
    (v1) edge node [left] {\trueEdge} (v5)
    (v2) edge node [left] {\falseEdge} (v4)
    (v3) edge node [right] {\falseEdge} (v5)
    (v4) edge node [below] {\falseEdge} (v5)
    ;

\end{tikzpicture}\hspace{2cm}\begin{tikzpicture}

  \node[draw, circle] (v1) {$v_1$};
  \node[draw, circle] (v2) [below left=of v1] {$v_2$};
  \node[draw, circle] (v3) [below right=of v1] {$v_3$};
  \node[draw, circle] (v4) [below=of v2] {$v_4$};
  \node[draw, circle] (v5) [below=of v3] {$v_5$};
  
  \draw[-]
    (v1) edge node [above] {\trueEdge} (v2)
    (v1) edge node [above] {\newTrueEdge} (v3)
    (v1) edge node [left] {\trueEdge} (v5)
    (v2) edge node [left] {\newTrueEdge} (v4)
    (v3) edge node [right] {\falseEdge} (v5)
    (v4) edge node [below] {\falseEdge} (v5)
    ;

\end{tikzpicture}\end{center}
\label{fig:friendCircle}
\caption{Example of a \ruleset{Friend Circle} move.  In the position on the left, let the seed set $S = \{v_1, v_2, v_3, v_4\}$, all of which are acceptable to choose because they all have an incident false edge.  If a player chooses $v_2$, then the result is the right-hand position.  In the second position, $v_2$ has had all of it's incident edges become true.  In addition, since $(v_1, v_2)$ was true, all of $v_1$'s incident edges have also changed to true.  The altered edges in the figure are represented in bold.  Note that in the resulting position, the next player can only choose to play at either $v_3$ and $v_4$, as $v_1$ and $v_2$ have only true edges and $v_5 \notin S$.}
\end{figure}

By inducing people in the seed's friend circle who had existing 
intersection with the chosen seed to broadcast, 
{\sc Friend Circle} emulates an elementary two-step network cascading in
social influence.

\subsection{Intractability of {\sc Friend Circle}}

\label{Sec:NKFC}

We first connect {\sc Friend Circle} to the classical graph-theory game {\sc Node-Kayles}.

\begin{gameruleset}[(\sc Node-Kayles)]
The starting position of {\sc Node-Kayles} 
  is an undirected graph $G = (V,E)$.

During the game, two players alternate turns selecting vertices,
where a vertex $v\in V$ is feasible if neither it has already been selected in 
the previous turns nor any of its neighbors has already been selected.
The player who has no more feasible move loses the game.
\end{gameruleset}

When the selected vertices form an {\em maximal independent set}
of $G$, the next player cannot make a move, and hence loses
the game.
It is well-known that {\sc Node-Kayles} is \cclass{PSPACE} complete \cite{DBLP:journals/jcss/Schaefer78}.

\begin{theorem}[{\sc Friend Circle} is \cclass{PSPACE}-complete]
The problem of determining whether a {\sc Friend Circle} position is winnable is \cclass{PSPACE}-complete. 
\end{theorem}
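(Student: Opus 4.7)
The plan is to establish both containment in and hardness for \cclass{PSPACE}. Containment is straightforward: every feasible move strictly increases the number of true edges (the false edge witnessing feasibility of the chosen $v$ must turn true), so the game tree has depth at most $|E|$, and a standard recursive tree-search then decides winnability in polynomial space.

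For hardness, I would reduce from {\sc Node-Kayles}. Given a {\sc Node-Kayles} graph $G=(V,E)$, construct the {\sc Friend Circle} position $(G',S,w)$ by adding, for each $v\in V$, a fresh degree-one ``flag'' vertex $f_v$ adjacent only to $v$; setting $S=V$ (the flag vertices are \emph{not} in the seed); and initializing $w(e)=\fcTrue$ for every original edge $e\in E$ while $w((v,f_v))=\fcFalse$ for every flag edge.

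The key claim to verify is that this yields a move-for-move simulation. A seed vertex $v$ is feasible in $(G',S,w)$ precisely when its unique potentially-false incident edge $(v,f_v)$ is still false, and selecting $v$ has two effects. First, $(v,f_v)$ turns true; second, for each original neighbor $u\in N_G(v)$ the fact that $(v,u)$ was already true triggers the two-step cascade, painting every edge at $u$ true---in particular $(u,f_u)$. Hence, after selecting $v$, exactly the vertices in the closed neighborhood $N_G[v]$ become infeasible, matching the removal rule of {\sc Node-Kayles}. An easy induction on the move count then shows that the two game trees are isomorphic, so their outcome classes coincide, and PSPACE-hardness transfers from {\sc Node-Kayles} \cite{DBLP:journals/jcss/Schaefer78}.

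The main obstacle is engineering the gadget so that the two-step cascade hits exactly $N_G[v]$ and nothing else. Preloading the original edges as true while keeping the flag edges false is the crucial trick: the flag edges guarantee that every vertex is feasible in $(G',S,w)$ exactly when its {\sc Node-Kayles} counterpart is, whereas the preloaded true edges ensure that choosing $v$ simultaneously turns every edge at each $u\in N_G(v)$ true (killing $u$'s flag $(u,f_u)$), without any further, unintended propagation to vertices at graph-distance two or more, since the Friend Circle cascade rule fires only one level deep through pre-existing true edges.
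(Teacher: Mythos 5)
Your proposal is correct and takes essentially the same approach as the paper: your degree-one ``flag'' vertices $f_v$ are exactly the paper's $t_v$ gadget, with the same seed set $S=V$, the same weight initialization (original edges \fcTrue, flag edges \fcFalse), and the same closed-neighborhood cascade analysis showing that selecting $v$ kills precisely the flags of $N_G[v]$. The only immaterial difference is in the containment argument, where you bound the game-tree depth by $|E|$ while the paper bounds it by $|S|$.
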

\begin{proof}
First, we show that {\sc Friend Circle} is \cclass{PSPACE}-solvable.
During a game of {\sc Friend Circle} starting at $(G, S, w)$, once a node $s\in S$ is selected by one of the players, $s$ all edges incident to $s$ become \fcTrue.  Since true edges can never later become \fcFalse, $s$ can never again be chosen for a move and the height of the game tree is at most $|S|$. 
Then, by the standard depth-first-search (DFS) procedure for evaluating the game tree for $(G,S, w)$ in {\sc Friend Circle}, we can determine the outcome class of in polynomial space.

To establish that {\sc Friend Circle} is a PSPACE-hard game, we reduce  {\sc Node-Kayles} to {\sc Friend Circle}. 


\begin{figure}[h!]
\begin{center}\begin{tikzpicture}[node distance = 1cm]
    \node[draw, circle] (v) {$v$};
    
    \node[draw, circle] (n1) [above right=of v] {};
    \node[draw, circle] (n2) [above= of v] {};
    \node[draw, circle] (n3) [below left=of n2] {};
    \node[draw, circle] (n5) [below=of v, label=below:{$v$ and $N_{G_0}(v)$}] {};
    \node[draw, circle] (n4) [below=of n3] {};
    
    \draw[-]
        (v) edge (n1)
        (v) edge (n2)
        (v) edge (n3)
        (v) edge (n4)
        (v) edge (n5);
\end{tikzpicture}
\hspace{2cm}
\begin{tikzpicture}[node distance = 1cm]
    \node[draw, circle] (v) {$v$};
    
    \node[draw, circle] (n1) [above right=of v] {};
    \node[draw, circle] (n2) [above= of v] {};
    \node[draw, circle] (n3) [below left=of n2] {};
    \node[draw, circle] (n5) [below=of v] {};
    \node[draw, circle] (n4) [below=of n3] {};
    
        
    \node[draw, circle] (t) [right=of v] {$t_v$};
    
    \draw[-]
        (v) edge node [above] {\fcTrue} (n1)
        (v) edge node [left] {\fcTrue} (n2)
        (v) edge node [above] {\fcTrue} (n3)
        (v) edge node [above] {\fcTrue} (n4)
        (v) edge node [left] {\fcTrue} (n5)
        (v) edge node [below] {\fcFalse} (t);
        
\end{tikzpicture}\end{center}
\label{fig:NodeKaylesToFriendCircle}
\caption{Example of the reduction from \ruleset{Node Kayles} to \ruleset{Friend Circle}.  On the left is a \ruleset{Node Kayles} vertex and it's neighborhood.  On the right is those same vertices, along with $t_v$ with \fcTrue weights on all the old edges and \fcFalse on the new edge with $(v, t_v)$.}
\end{figure}
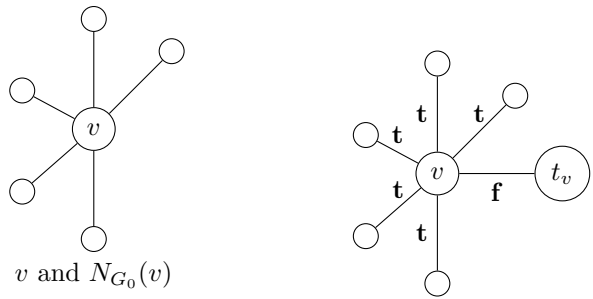

Suppose we have a {\sc Node-Kayles} instance at graph $G_0 = (V_0, E_0)$.
For the reduced {\sc Friend Circle} position, we create a new graph 
$G=(V,E)$ as the following.
First, for each $v\in V_0$, we introduce a new vertex $t_v$.
Let $T_0 = \{t_v| v\in V_0\}$, so $V = V_0 \cup T_0$.  In addition, let $E_1 = \{(v, v_t)\ |\ v \in V_0\}$, so $E = E_0 \cup E_1$.
Next, we set the weights:
\begin{itemize}
  \item $\forall e \in E_0: w(e) = \fcTrue$, and
  \item $\forall e \in E_1: w(e) = \fcFalse$
\end{itemize}
as shown in \ruleset{Friend Circle}.
Last, we set $S = V_0$.

We now prove that {\sc Friend Circle} is winnable at $(G, S, w)$
if and only {\sc Node Kayles} is winnable at $G_0$.
Note that because $\forall v\in V_0$, $w (v,t_v) = \fcFalse$, all vertices in $V_0$ are feasible choices for the current player in 
{\sc Friend Circle}.  As the game progresses, vertices in $V_0$ are no longer able to be chosen when their edge to the $t$ vertex becomes true.
From here the argument is very simple: each play on vertex $v \in V_0$ in \ruleset{Node Kayles} corresponds exactly to the play on $v$ in \ruleset{Friend Circle}.  In \ruleset{Node Kayles}, when $v$ is chosen, itself and all its neighbors, $N_{G_0}(v)$ are removed from future consideration.  In \ruleset{Friend Circle}, $v$ is also removed because the edge $(v, t_v)$ becomes \fcTrue.  In addition, since all neighboring vertices $x \in N_{G_0}(v)$ share a \fcTrue edge with $v$, their edge with $t_x$ will also become \fcTrue, but no other vertices will be removed from future choice.  
\end{proof}

\subsection{{\sc Transverse Wave} in {\sc Friend Circle}}

\label{Sec:TWFC}


We now show that {\sc Friend Circle} \textit{contains} {\sc Transverse Wave} as \textit{special cases}.
In the proposition below and the rest of the paper, we say that two game instances are \textit{isomorphic} to each other if there exists a \textit{bijection} between their moves such that their game trees are isomorphic under this bijection.

\begin{proposition}[Social-Influence Connection of {\sc Transverse Wave}]
\label{prop:TWSocialInfluence}
For any complete bipartite graph $G = (V_1,V_2, E)$ over two disjoint ground sets $V_1$ and $V_2$ (i.e., with $E = V_1 \times V_2$), any weighting $w: E \rightarrow 
\{\fcFalse, \fcTrue\}$, and seeds $S = V_1$, 
{\sc Friend Circle} position $(G, S, w)$ is isomorphic to {\sc Crosswise OR} over a pseudo-adjacency matrix $\mathbf{A}_G$ for $G$ with $V_1$ as columns and $V_2$ as rows.  In this matrix, we will have the entry at column $x \in V_1$ and row $y \in V_2$ be 0 if $w((x,y)) = \fcFalse$ and 1 if the weight is \fcTrue.
\end{proposition}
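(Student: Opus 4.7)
The plan is to exhibit the natural bijection between moves and then verify, step by step, that the two games evolve identically under this bijection, so that their game trees are isomorphic. On the \ruleset{Friend Circle} side a move is the selection of a vertex $v \in V_1 = S$ with at least one \fcFalse\ incident edge; on the \ruleset{Crosswise OR} side a move is the selection of a column $j$ of $\mathbf{A}_G$ with $\mathbf{A}_G[:,j] \neq \vec{1}$. Since each $v \in V_1$ indexes exactly one column of $\mathbf{A}_G$, I would use the obvious identity bijection $v \longleftrightarrow j_v$ and then show (i) the feasibility conditions agree and (ii) the successor positions agree under this correspondence.

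For (i), because $G$ is complete bipartite, $N_G(v) = V_2$ for every $v \in V_1$, so the set of edges incident to $v$ is in natural bijection with the entries of column $j_v$ of $\mathbf{A}_G$. Hence $v$ has an \fcFalse\ incident edge iff some entry of column $j_v$ is $0$, iff $\mathbf{A}_G[:,j_v] \neq \vec{1}$, which is precisely the feasibility condition of \ruleset{Crosswise OR}. For (ii), I would trace through the \ruleset{Friend Circle} update rule using $N_G(v) = V_2$ and $N_G(y) = V_1$ for every $y \in V_2$: the first clause sets $w'((v,y)) = \fcTrue$ for every $y \in V_2$, which turns column $j_v$ into $\vec{1}$; the second clause, for each $y \in V_2$ with $w((v,y)) = \fcTrue$, sets $w'((y,z)) = \fcTrue$ for every $z \in V_1$, which turns row $y$ into $\vec{1}$ and leaves the remaining rows unchanged. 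This matches \ruleset{Crosswise OR} exactly, including the important detail that the trigger condition uses the \emph{pre-move} weight $w((v,y))$ rather than $w'$, mirroring the fact that the componentwise OR in \ruleset{Crosswise OR} uses the pre-move entry $\mathbf{B}[i,j_v]$.

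The one subtlety worth spelling out is that the bijection must extend along the entire game tree. This goes through because the graph $G$ and the seed set $S = V_1$ are unchanged by a move, so any successor of a ``complete bipartite with $S=V_1$'' \ruleset{Friend Circle} position is again of the same shape, and the corresponding successor on the matrix side is again an $|V_2| \times |V_1|$ Boolean matrix. An induction on the height of the game tree (which is finite, since each move turns an entire column of the matrix into $\vec{1}$ and thus removes at least one seed from future play) then yields the desired isomorphism of game trees, and hence of the game instances. I do not anticipate a serious obstacle; the most error-prone part is simply the bookkeeping of the two-hop cascade in \ruleset{Friend Circle} and the careful use of the pre-move weights, both of which the complete bipartite structure makes transparent.
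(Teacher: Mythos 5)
Your proof is correct and follows essentially the same route as the paper's: the identity bijection between seed vertices and columns, the observation that completeness of $G$ makes the feasibility conditions coincide, and the identification of the direct influence with filling column $j_v$ and the cascade on pre-move \fcTrue\ neighbors with the crosswise ORs. Your write-up is somewhat more careful than the paper's (explicitly noting the pre-move-weight subtlety and the induction over the game tree), but it is the same argument.
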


Note that by varying 
$w: E \rightarrow \{\fcFalse, \fcTrue]\}$, one can realize any $|V_1|\times |V_2|$ Boolean matrix with $\mathbf{A}_G$.
Thus, {\sc Friend Circle} \textit{generalizes} {\sc Transverse Wave}.

\begin{proof}
Imagine these two games are played in tandem.
We map the selection of a vertex $v\in S = V_1$ to the selection of the
column associated with $v$ in the matrix $\mathbf{A}_G$ of $G$.
Because $G$ is a complete bipartite graph, $v$ is feasible for {\sc Friend Circle} if there exists $u\in V_2$ such that $w(u,v) = \fcFalse$.
Thus, the column associated with $v$ in $\mathbf{A}_G$ is not all 1s.
This is precisely the condition that $v$ is feasible in {\sc Crosswise OR} over $\mathbf{A}_G$.
The Direct Influence at $v$ in {\sc Friend Circle} over $G$ changes all $v$'s edges to \fcTrue and the subsequent Cascading Influence
on $v$'s initially \fcTrue neighbors in $V_2$ is isomorphic to crosswise ORs.
Thus, {\sc Friend Circle} on $G$ is isomorphic to {\sc Crosswise OR} over
$\mathbf{A}_G$.
\end{proof}

\begin{figure}[h!]
\begin{center}\begin{tikzpicture}[node distance = .5cm]

  \node[] (leftSide) {$V_1$};
  \node[draw, circle] (1) [below=of leftSide] {$1$};
  \node[draw, circle] (2) [below=of 1] {$2$};
  \node[draw, circle] (3) [below=of 2] {$3$};
  \node[draw, circle] (4) [below=of 3] {$4$};
  \node[draw, circle] (5) [below=of 4] {$5$};
  \node[draw, circle] (6) [below=of 5] {$6$};
  
  \node[] (rightSide) at (3, 0) {$V_2$};
  \node[draw, circle] (a) at (3,-2.2) {$a$};
  \node[draw, circle] (b) [below=of a] {$b$};
  \node[draw, circle] (c) [below=of b] {$c$};
  \node[draw, circle] (d) [below=of c] {$d$};
  
  \draw[-]
    (1) edge (a)
    (1) edge (b)
    (1) edge (d)
    (2) edge (b)
    (2) edge (c)
    (2) edge (d)
    (3) edge (a)
    (3) edge (b)
    (4) edge (c)
    (4) edge (d)
    (5) edge (a)
    (5) edge (c)
    (6) edge (c)
    ;

\end{tikzpicture}\hspace{2cm}
\begin{tikzpicture}[box/.style={rectangle,draw=black,thick, minimum size=1cm}]

  \foreach \x in {1,...,6}{
      \foreach \y in {-1,...,-4}
          \node[box, fill=mygreen] at (\x,\y){};
  }
  
  \node at (1,0) {1};
  \node at (2,0) {2};
  \node at (3,0) {3};
  \node at (4,0) {4};
  \node at (5,0) {5};
  \node at (6,0) {6};
  
  \node at (0,-1) {$a$};
  \node at (0,-2) {$b$};
  \node at (0,-3) {$c$};
  \node at (0,-4) {$d$};
  
  \node[box,fill=mypurple] at (1,-1){}; 
  \node[box,fill=mypurple] at (1,-2){}; 
  \node[box,fill=mypurple] at (1,-4){}; 
  \node[box,fill=mypurple] at (2,-2){}; 
  \node[box,fill=mypurple] at (2,-3){}; 
  \node[box,fill=mypurple] at (2,-4){}; 
  \node[box,fill=mypurple] at (3,-1){}; 
  \node[box,fill=mypurple] at (3,-2){};  
  \node[box,fill=mypurple] at (4,-3){};  
  \node[box,fill=mypurple] at (4,-4){};  
  \node[box,fill=mypurple] at (5,-1){};  
  \node[box,fill=mypurple] at (5,-3){};  
  \node[box,fill=mypurple] at (6,-3){};

\end{tikzpicture}\end{center}
\caption{On the left is a \ruleset{Friend Circle} position on the complete bipartite graph between $V_1$ and $V_2$, where the seed set $S = V_1$.  Instead of labelling edges, we have removed all false edges and are including only true edges.  On the right is the equivalent \ruleset{Transverse Wave} position.  The purple cells correspond to the (true) edges in the bipartite graph.}
\label{fig:transverseWaveFriendCircle}
\end{figure}
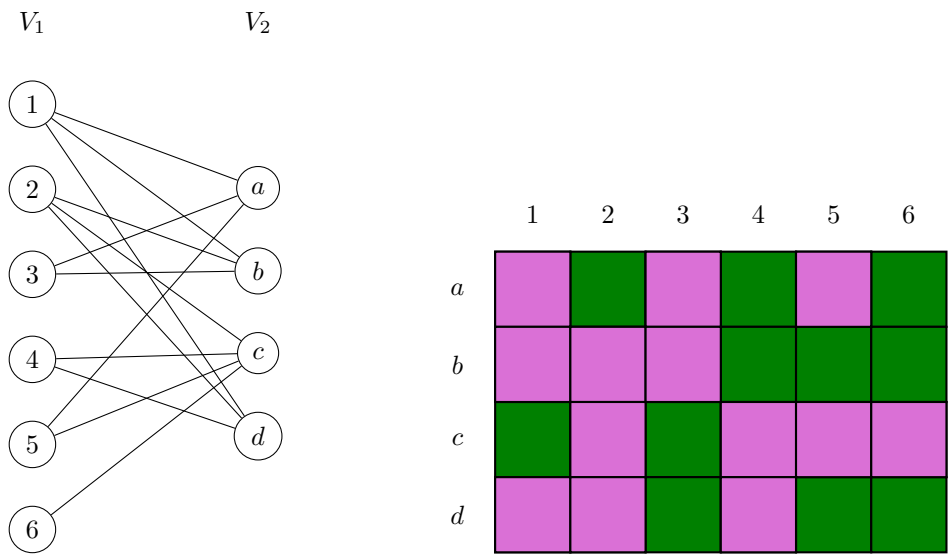


\section{Connection to Quantum Combinatorial Game Theory}
\label{Sec:Quantum}

In this section, we discuss the connection of {\sc Transverse Wave} to 
a basic quantum-inspired combinatorial game.

Quantum computing is inspirational not only because the advances of quantum technologies have the potential to drastically change the landscape of computing and digital security, but also because the quantum framework---powered by superpositions, entanglements, and collapses---has fascinating
mathematical structures and properties.
Not surprisingly, quantumness has already found their way to  enrich combinatorial game theory.
In early 2000s, Allan Goff introduced the basic 
  quantum elements into {\sc Tic-Tac-Toe}, 
  as conceptual illustration of quantum physics \cite{goff2006quantum}.
The quantum-generalization of {\sc Tic-Tac-Toe} expands
 players strategy space with superposition of 
 classical moves, creating game boards with entangled components.
Consistency-based conditions for collapsing can then reduce the degree of 
possible realizations in the potential parallel game  scenarios.
In 2017,  Dorbec and Mhalla \cite{dorbec2017toward} 
 presented a general framework, motivated by Goff's concrete adventures,
 for quantum-inspired extension of  classical combinatoral games.
Their framework enabled our recent work \cite{BurkeFerlandTengQCGT} on the 
 structures and complexity of quantum-inspired games, 
 which also led us to make the connection with
 logic and social-influence inspired {\sc Transverse Wave} and 
 {\sc Friend Circle} in this paper.

\subsection{Superposition of Moves and Game Realizations}

In this subsection, we briefly discuss quantum combinatorial game theory (QCGT) to introduce needed concepts and notations for this paper. More detailed discussions of QCGT can be found in 
\cite{goff2006quantum,dorbec2017toward,BurkeFerlandTengQCGT}.

\begin{itemize}
\item {\bf\em Quantum Moves}: A quantum move is a superposition of two or more distinct classical moves.
The superposition of $w$ classical moves $\sigma_1, ...,\sigma_w$---called a $w$-wide quantum move---is denoted by $\langle \sigma_1\ | ... \ | \ \sigma_w\rangle$.

\item {\bf\em Quantum Game Position}: A quantum position is a superposition of two or more distinct classical game positions.
The superposition of $s$ classical positions $b_1, ...,b_s$---called an $s$-wide quantum superposition--is denoted by $\mathbb{B} = \langle b_1\ | ... \ | \  b_s\rangle$.
We call  $b_1, ...,b_s$ {\em realizations} of $\mathbb{B}$. 

We sometimes refer to classical moves and positions as 1-wide superpositions.
\end{itemize}

Classical/quantum moves can be applied to classical/quantum positions.
Variants of Dorbec-Mhalla framework differ in 
the condition when classical moves are 
  allowed to engage with quantum positions.
In this paper, we will focus on the least restrictive flavor---referred to by variant $D$ in \cite{dorbec2017toward,BurkeFerlandTengQCGT}---in which 
moves, classical or quantum, are allowed to interact with game positions, classical or quantum, provided that they are feasible.
There are some subtle differences between these variants, and we direct interested readers to \cite{dorbec2017toward,BurkeFerlandTengQCGT}. 
In this least restrictive flavor, a superposition of moves (including 1-wide superpositions) is {\em feasible} for a quantum position (including 1-wide superpositions) if each move is feasible for some realization in the quantum position.
A superposition of moves and a quantum position of realizations creates a "tensor" of classical interactions in which infeasible classical interactions introduce collapses in realizations.

Quantum moves can have impact on the outcome class of games, even on classical positions. We borrow an illustration (see Figure \ref{fig:nim22-intro}) from \cite{BurkeFerlandTengQCGT} 
showing that quantum moves can change the outcome class of a basic \ruleset{Nim} position.  $(2,2)$ becomes a fuzzy ($\outcomeClass{N}$, a first-player win) position instead of a zero ($\outcomeClass{P}$, a second-player win) position.
Quantumness matters for many combinatorial games, as investigated in
\cite{BurkeFerlandTengQCGT}.

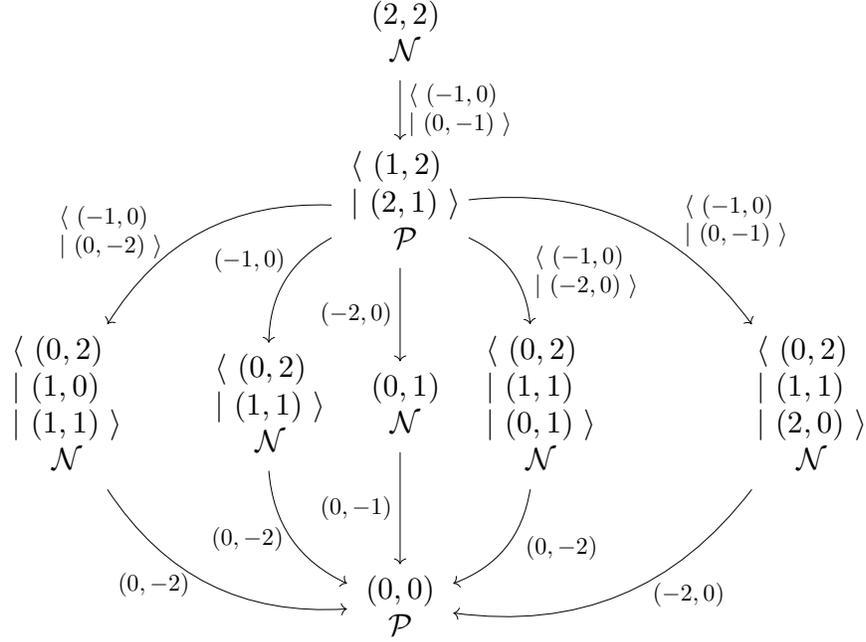
\begin{figure}[ht]
\begin{center}
  \def\scaleAmt{1.3}
  \scalebox{.9}{
  \begin{tikzpicture} [node distance = 1cm]
    \node (start) at (0, .5)  {\scalebox{\scaleAmt}{
        \begin{tabular}{@{}c@{}}
            $(2,2)$ \\ 
            \multicolumn{1}{c}{$\outcomeClass{N}$}
        \end{tabular}}};
    \node (first) at (0, -2) {\scalebox{\scaleAmt}{
        \begin{tabular}{@{}l@{}c@{}}
            $\langle\ (1,2)$\\ $|\ (2,1)\ \rangle$ \\ 
            \multicolumn{1}{c}{$\outcomeClass{P}$}
        \end{tabular}}};
    \node (secondA) at (-5, -5) {\scalebox{\scaleAmt}{
        \begin{tabular}{@{}l@{}l@{}c@{}}
            $\langle\ (0,2)$\\ $ |\ (1, 0)$\\ $|\ (1,1)\ \rangle$ \\ 
            \multicolumn{1}{c}{$\outcomeClass{N}$}
        \end{tabular}}};
    \node (secondB) at (-2, -5) {\scalebox{\scaleAmt}{
        \begin{tabular}{@{}l@{}c@{}}
            $\langle\ (0,2)$\\ $|\ (1,1)\ \rangle$\\ 
            \multicolumn{1}{c}{$\outcomeClass{N}$}
        \end{tabular}}};
    \node (secondC) at (0, -5) {\scalebox{\scaleAmt}{
        \begin{tabular}{@{}c@{}}
            $(0, 1)$ \\ $\outcomeClass{N}$
        \end{tabular}}};
    \node (secondD) at (2, -5) {\scalebox{\scaleAmt}{
        \begin{tabular}{@{}l@{}l@{}c@{}} 
            $\langle\ (0,2)$\\ $|\ (1,1)$\\ $|\ (0,1)\ \rangle$ \\ 
            \multicolumn{1}{c}{$\outcomeClass{N}$}
        \end{tabular}}};
    \node (secondE) at (6, -5) {\scalebox{\scaleAmt}{
        \begin{tabular}{@{}l@{}l@{}c@{}} 
            $\langle\ (0,2)$\\ $|\ (1,1)$\\ $|\ (2,0)\ \rangle$ \\ 
            \multicolumn{1}{c}{$\outcomeClass{N}$}
        \end{tabular}}};
    \node (zero) at (0, -8) {\scalebox{\scaleAmt}{
        \begin{tabular}{@{}c@{}}
            $(0,0)$\\ $\outcomeClass{P}$
        \end{tabular} } };

    \path[->] 
        (start) edge [] node [right, text width=1cm] {$\langle\ (-1, 0)$\\ $|\ (0, -1)\ \rangle$} (first)
        (first) edge [bend right] node [left, text width = 1cm, xshift = -28pt] {$\langle\ (-1, 0)$\\ $|\ (0, -2)\ \rangle$} (secondA)
        (first) edge [bend right] node [above, xshift = -15pt] {$(-1, 0)$} (secondB)
        (first) edge [] node [left] {$(-2, 0)$} (secondC)
        (first) edge [bend left] node [right, text width = 1cm, xshift = 6pt] {$\langle\ (-1, 0)$\\ $|\ (-2, 0)\ \rangle$} (secondD)
        (first) edge [bend left] node [right, xshift = 20pt, text width = 1cm] {$\langle\ (-1,0)$\\ $|\ (0, -1)\ \rangle$} (secondE)
        (secondA) edge [bend right] node [left, xshift = -5pt] {$(0,-2)$} (zero)
        (secondB) edge [bend right] node [left] {$(0,-2)$} (zero)
        (secondC) edge [] node [left] {$(0, -1)$} (zero)
        (secondD) edge [bend left] node [right, xshift = 5pt] {$(0, -2)$} (zero)
        (secondE) edge [bend left] node [right, xshift = 10pt] {$(-2, 0)$} (zero)
        ;
    
  \end{tikzpicture}}
\end{center}  
  \caption{Illustration from \cite{BurkeFerlandTengQCGT}: Winning strategy for Next player in  \ruleset{Quantum Nim} $(2,2)$, showing that quantum moves impact the game's outcome.  (There are four additional move options from $\braket{(1,2)\ |\ (2,1)}$ that are not shown because they are symmetric to moves given.)}
  \label{fig:nim22-intro}
\end{figure}

\begin{gameruleset}[(\sc Nim)]
  A \ruleset{Nim} position is described by a non-negative integer vector, e.g. $(3, 5, 7) = G$, representing heaps of objects (here pebbles).  A turn consists of removing pebbles from exactly one of those heaps.  We describe these moves as a non-positive vector with exactly one non-zero element, e.g. $(0, -2, 0)$.  Each move cannot remove more pebbles from a heap than already exist there.  Thus, the move $(-7, 0, 0)$ is not a legal move from $G$, above.  When all heaps are zero, there are no legal moves.
\end{gameruleset}

To readers familiar with combinatorial game theory, it may seem odd that we explicitly define the description of moves in the game. However, it is integral for playing quantum combinatorial games, as move description effects quantum collapse. For more information, see \cite{BurkeFerlandTengQCGT}.

Quantum interactions between moves and positions,
  as demonstrated in \cite{goff2006quantum,dorbec2017toward},
 can have highly non-trivial impact to Nimber Arithmetic.
In additon, as shown in \cite{BurkeFerlandTengQCGT}, quantum moves can also fundamentally impact the complexity of combinatorial games.

\subsection{{\sc Demi-Quantum Nim}: Superposition of {\sc Nim} Positions}

\label{Sec:DQNTW}
The combinatorial game that contains {\sc Transverse Wave} as a special case is derived from {\sc Nim} \cite{Bouton:1901,Gale:1974}, in a framework motivated by practical implementation of quantum combinatorial games \cite{BurkeFerlandTengQCGT}.

For integer $s>1$, then a $s$-wide quantum {\sc Nim} position of $n$ heaps can be  specified by an $s\times n$ integer matrix, where each row defines a realization of {\sc Nim} position.
For example, the
  $4$-wide quantum {\sc Nim} position with $6$ piles,
\begin{eqnarray*}
\langle (5,3,0,4,2,2)\ |\ (1,3,3,2,1,0) \ | \ (0, 0, 4, 6, 5, 7) \ |\ (4,2,5,0,1,2)  \rangle,
\end{eqnarray*}
can be expressed in the matrix form as:
\begin{eqnarray}
\left( 
\begin{array}{ccccccc}
5 & 3 & 0 & 4 & 2 & 2\\
1 & 3 & 3 & 2 & 1 & 0\\
0 & 0 & 4 & 6 & 5 & 7\\
4 & 2 & 5 & 0 & 1 & 2
\end{array}
\right)
\label{eqn:QuantumNimExample}
\end{eqnarray}

Like the quantum generalization of combinatorial games,
this {\em demi-quantum generalization}
systematically extends any combinatorial 
game by expanding its game positions \cite{BurkeFerlandTengQCGT}.  The intuitive difference here is that players may not introduce new quantum moves, they may only make classical moves, which apply to all (and may collapse some) of the realizations in the current superposition.

\begin{definition}[Demi-Quantum Generalization of Combinatorial Games]
For any game ruleset {\sc R}, 
the demi-quantum generalization of {\sc R}, denoted by, 
{\sc Demi-Quantum-R}, is a combinatorial game 
defined by the interaction of 
classical moves of {\sc R} with quantum positions in {\sc R}.

Central to the demi-quantum transition is the rule for {\em collapses}.
Given a quantum superposition $\mathbb{B}$ and a classical move $\sigma$ of {\sc R},  $\sigma$ is feasible if it is feasible for at least one realization in $\mathbb{B}$, and $\sigma$ collapses all realizations in $\mathbb{B}$ for which $\sigma$ is infeasible, meanwhile transforming each of the other realizations
according to ruleset {\sc R}.
\end{definition}

For example, the move $(0, 0, 0, 0, -2, 0)$ applied to the quantum {\sc Nim} position in Equation 
(\ref{eqn:QuantumNimExample}) collapses realizations 2 and 4, and transforms realizations 1 and 3, according to {\sc Nim} as:

$$
\left( 
\begin{array}{ccccccc}
5 & 3 & 0 & 4 & \{2-2\} & 2\\
\boxtimes & \boxtimes & \boxtimes & \boxtimes& \boxtimes& \boxtimes \\
0 & 0 & 4 & 6 & \{5-2\} & 7\\
\boxtimes & \boxtimes & \boxtimes & \boxtimes& \boxtimes& \boxtimes \\
\end{array}
\right)
=
\left( 
\begin{array}{ccccccc}
5 & 3 & 0 & 4 & 0 & 2\\
0 & 0 & 4 & 6 & 3 & 7
\end{array}
\right)
$$

Note that for any impartial ruleset ${\sc R}$, {\sc Demi-Quantum-R} remains impartial.
We now show that {\sc Demi-Quantum-Nim} contains {\sc Crosswise AND}, and thus, {\sc Transverse Wave}, as a special case.

\begin{proposition}[QCGT Connection of {\sc Transverse Wave}]
Let {\sc Boolean Nim} denote {\sc Nim} in which each heap has either one or zero pebbles.
{\sc Demi-Quantum Boolean Nim} is isomorphic to {\sc Crosswise AND}, and
hence isomorphic to {\sc Transverse Wave}.
\end{proposition}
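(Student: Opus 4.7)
The plan is to produce a direct, move-for-move isomorphism between {\sc Demi-Quantum Boolean Nim} and {\sc Crosswise AND}, then invoke the earlier proposition identifying {\sc Crosswise AND} with {\sc Transverse Wave}. First, I would encode an $s$-wide {\sc Demi-Quantum Boolean Nim} superposition on $n$ heaps as an $s \times n$ Boolean matrix $\mathbf{B}$ whose rows list the realizations $b_1,\dots,b_s \in \{0,1\}^n$. Since each heap holds at most one pebble, the only classical {\sc Nim} moves are ``remove the single pebble from heap $j$'' for $j \in [n]$; I map this to the {\sc Crosswise AND} move ``select column $j$'', giving a natural bijection on moves.

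Next, I would verify that the feasibility conditions coincide. The demi-quantum rule declares the move on heap $j$ feasible iff it is feasible for at least one realization, i.e., iff some row of $\mathbf{B}$ has a $1$ in coordinate $j$, equivalently $\mathbf{B}[:,j] \neq \vec{0}$; this is exactly the {\sc Crosswise AND} feasibility condition. I would then check that the two updates agree in structure: in a demi-quantum step, realizations with $0$ in coordinate $j$ collapse away while those with $1$ have their $j$-th coordinate set to $0$; in a {\sc Crosswise AND} step, row $i$ is ANDed component-wise with the scalar $\mathbf{B}[i,j]$ and its $(i,j)$ entry is forced to $0$, so rows with $\mathbf{B}[i,j]=0$ become all-zero while rows with $\mathbf{B}[i,j]=1$ have only their $j$-th entry flipped to $0$.

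The main obstacle, and the only place the match is not immediately tight, is the bookkeeping discrepancy between a ``collapsed realization'' (removed from the superposition) and an ``all-zero row'' (retained in the matrix). I plan to handle this by observing that an all-zero row is \emph{inert}: it contributes $0$ to every column, so it never affects feasibility, and it is fixed by every subsequent {\sc Crosswise AND} update (AND-with-anything of $\vec{0}$ is $\vec{0}$, and its $j$-th entry is already $0$). Consequently, adjoining or deleting all-zero rows induces a game-tree isomorphism on {\sc Crosswise AND} positions, and under this identification the post-move {\sc Crosswise AND} matrix coincides, row by row, with the post-move demi-quantum superposition.

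Finally, with the move bijection and the update/feasibility match in hand, I would lift the local correspondence to a full isomorphism of game trees by induction on height, which is bounded by $n$ since each move permanently zeros its chosen column and no move can re-introduce a $1$. The base case (all-zero matrix $\leftrightarrow$ empty superposition of non-trivial realizations) gives no feasible moves on either side, matching the terminal condition. Combined with the earlier proposition that {\sc Transverse Wave} and {\sc Crosswise AND} are isomorphic (via purple $\leftrightarrow 0$, green $\leftrightarrow 1$), this yields the ``hence'' clause that {\sc Demi-Quantum Boolean Nim} is isomorphic to {\sc Transverse Wave}.
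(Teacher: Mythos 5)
Your proposal is correct and follows essentially the same route as the paper: both encode the superposition as a Boolean matrix, identify the move on heap $j$ with selecting column $j$, and resolve the collapse-versus-all-zero-row discrepancy by the same key observation (the paper calls it the ``numerical interpretation'' of collapses; you call it inertness of all-zero rows). Your write-up is somewhat more explicit about feasibility matching and the inductive lift to game trees, but the underlying argument is identical.
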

\begin{proof}
The proof uses the following  equivalent ``numerical interpretation'' 
of collapses in (demi-)quantum generalization of {\sc Nim}.
When a realization collapses, we can either remove it from the superposition or replace it with a {\sc Nim} position in which all piles have zero pebbles.
For example, the following two {\sc Nim} superpositions are equivalent for subsequent game dynamics. 
$$
\left( 
\begin{array}{ccccccc}
5 & 3 & 0 & 4 & 0 & 2\\
\boxtimes & \boxtimes & \boxtimes & \boxtimes& \boxtimes& \boxtimes \\
0 & 0 & 4 & 6 & 3 & 7\\
\boxtimes & \boxtimes & \boxtimes & \boxtimes& \boxtimes& \boxtimes \\
\end{array}
\right)
\equiv
\left( 
\begin{array}{ccccccc}
5 & 3 & 0 & 4 & 0 & 2\\
0 & 0  & 0 & 0& 0& 0 \\
0 & 0 & 4 & 6 & 3 & 7\\
0 & 0 & 0 & 0& 0& 0 \\
\end{array}
\right)
$$


In {\sc Boolean Nim}, 
  each move can only remove one pebble
  from a pile.
So, we can simplify the specification of the move by the index $i$ alone.
Note also that each quantum {\sc Boolean Nim} position can be specified by a Boolean matrix.
Let $\mathbf{B}$ denote the Boolean matrix of {\sc Demi-Quantum-Boolean-Nim} under consideration.
With the above numerical interpretation of collapses in 
demi-quantum generalization, the collapse of a realization of $\mathbf{B}$
when applying  a move $i$ corresponding to the case that 
the corresponding row in $\mathbf{B}$ has 0 at $i^{th}$ entry, 
and the row is replaced by the crosswise AND with that column selection.
Thus, {\sc Demi-Quantum Boolean Nim} with position $\mathbf{B}$ is 
isomorphic to  {\sc Crosswise AND} with position $\mathbf{B}$.
\end{proof}

As an aside, notice that positions with all green tiles are trivial. Interesting games need be primed with some arbitrary purpletiles. Thus  the hard positions given by the reduction could be natural starting positions. Thus the hardness statement is particularly meaningful for \ruleset{Transverse Wave}.

\section{The Graph Structures Underlying {\sc Demi-Quantum Nim}}
\label{Sec:GraphStructure}

As the basis of Nimbers and Sprague-Grundy theory \cite{WinningWays:2001,Sprague:1936,Grundy:1939}, {\sc Nim} holds a unique
place in combinatorial game theory.
It is also among the few non-trivial-looking games with polynomial-time solution.
Over the past decades, multiple efforts have been made to introduce
graph-theoretical elements into the game of {\sc Nim} \cite{DBLP:journals/tcs/Fukuyama03,StockmanREU:2004,BurkeGeorge}.
In 2001, Fukuyama \cite{DBLP:journals/tcs/Fukuyama03} introduced an edge-version of 
{\sc Graph Nim}, with {\sc Nim} piles placed on edges of undirected graphs.
Stockman \cite{StockmanREU:2004} analyzed several versions with piles on the nodes. 
Both use the graph structure to capture the locality of the piles players can nim from.
Burke and George \cite{BurkeGeorge} then formulated a version called {\sc Neighboring Nim}, for which  classical {\sc Nim} corresponds to {\sc Neighboring Nim} over the complete graph, where each vertex hosts a pile.

The graph structures have profound impact to the game of {\sc Nim}
both mathematically \cite{DBLP:journals/tcs/Fukuyama03,StockmanREU:2004} and computationally
\cite{BurkeGeorge}.
By a reduction from {\sc Geography}, Burke and George 
proved  that {\sc Neighboring Nim} on some graphs is
 PSPACE-hard while on others (such as complete graph) is polynomial-time solvable \cite{BurkeGeorge}.
 However, {\sc Neighboring Boolean Nim}, the graph {\sc Nim} where each pile has at most one pebble, is equivalent to {\sc Undirected Geography}, and thus can be solved in polynomial time \cite{DBLP:journals/tcs/FraenkelSU93}.
 

In contrast, {\sc Demi-Quantum Boolean Nim} is intractable.

\begin{theorem}[Intractability of {\sc Demi-Quantum Boolean Nim}]
{\sc Demi-Quantum Boolean Nim}, and hence {\sc Transverse Wave} \mbox{\rm (}{\sc Crosswise AND}; {\sc Crosswise OR}\mbox{\rm )}, is a PSPACE-complete game.
\end{theorem}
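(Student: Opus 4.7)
The plan is to establish \cclass{PSPACE} membership and \cclass{PSPACE}-hardness separately.  Membership follows from the observation already recorded earlier in the paper: every move turns at least one column entirely purple, and purple cells never revert to green, so any play from an $m\times n$ starting position terminates within $n$ plies.  A standard depth-first traversal of the game tree then decides the outcome class in polynomial space.

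For hardness, I would reduce from Schaefer's impartial \ruleset{Avoid True} on positive CNF formulas---players alternate assigning unset variables true, and the mover forced to satisfy every clause loses---which is \cclass{PSPACE}-complete \cite{DBLP:journals/jcss/Schaefer78}.  Given a positive CNF $\phi = C_1 \wedge \cdots \wedge C_m$ on variables $x_1,\ldots,x_n$, I would construct the $m\times n$ Boolean matrix $\mathbf{B}$ with $\mathbf{B}[i,j]=1$ iff $x_j \in C_i$, and play \ruleset{Crosswise OR} on $\mathbf{B}$, which, by the propositions established above, is isomorphic to \ruleset{Transverse Wave} and to \ruleset{Demi-Quantum Boolean Nim}.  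The reduction identifies the set $T$ of already-selected columns in \ruleset{Crosswise OR} with the set of variables assigned true in \ruleset{Avoid True}.  The key invariant, provable by induction on $|T|$, is that after the selections $T$, row $i$ of $\mathbf{B}$ is all-purple if and only if the current true-assignment $T$ satisfies $C_i$.  Granting this invariant, column $j$ is feasible (carries a green cell) exactly when $j\notin T$ and some clause $C_i$ with $x_j \notin C_i$ remains unsatisfied by $T$; equivalently, $T\cup\{x_j\}$ does not satisfy $\phi$, which is precisely the legality condition for setting $x_j$ true in \ruleset{Avoid True}.  Both rulesets declare the mover with no option the loser, so the game trees and therefore outcome classes coincide, and \cclass{PSPACE}-hardness transfers along the entire isomorphism chain.

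The principal delicate step will be the inductive verification of the row-satisfaction invariant under the cascaded recolorings.  When column $j$ is selected, the transverse wave simultaneously turns a whole row purple whenever that row already held a purple in column $j$; I must check this exactly mirrors ``adding $x_j$ to the true-assignment satisfies precisely those clauses that contain $x_j$ and were not already satisfied.''  The step requires tracking both (a) rows newly made all-purple, which should correspond to newly satisfied clauses, and (b) rows that remain partially green, which should still correspond to unsatisfied clauses and continue to supply green cells to the appropriate columns, so that feasibility and legality coincide throughout the play.  Once the invariant is pinned down, both directions of the reduction are immediate and the \cclass{PSPACE}-completeness of \ruleset{Demi-Quantum Boolean Nim}---and therefore of \ruleset{Transverse Wave}, \ruleset{Crosswise AND}, and \ruleset{Crosswise OR}---follows.
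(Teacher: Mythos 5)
Your proposal is correct and follows essentially the same route as the paper: the paper also proves membership via the bounded game-tree depth and establishes hardness by an isomorphism with Schaefer's \ruleset{Avoid True}, using exactly your correspondence (clauses $\leftrightarrow$ rows/realizations, variables $\leftrightarrow$ columns, cell $(i,j)$ purple iff $x_j \in C_i$), merely phrased through \ruleset{Demi-Quantum Boolean Nim}/\ruleset{Crosswise AND} (empty pile $=$ purple) rather than your complementary \ruleset{Crosswise OR} encoding. The only substantive difference is that the paper proves the two-way isomorphism (including how an already-set $T$ and satisfied clauses are handled), whereas you need and give only the one direction required for hardness, with the same row-satisfaction invariant the paper uses implicitly via ``collapsing a realization $=$ making its clause true.''
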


\subsection{The Logic and Graph Structures of {\sc Demi-Quantum Boolean Nim}}

The intractability follows from the next theorem, 
   which connects {\sc Demi-Quantum Boolean Nim} to 
   Schaefer's elegant PSPACE-complete game, {\sc Avoid True} \cite{DBLP:journals/jcss/Schaefer78}.
The reduction also reveals the bipartite and hyper-graph structures of \ruleset{Demi-Quantum Boolean Nim}.

\begin{gameruleset}[{\sc Avoid True}]
A game position of \ruleset{Avoid True} is defined by 
 a positive CNF $F$ (\texttt{and} of a set of \texttt{or}-clauses of only positive variables) over a ground set $V$ and  a subset $T\subset V$, the "true" variables, (which is usually the empty set at the beginning of the game) .
 
A turn consists of selecting one variable from $V \setminus T$, where 
a variable $x\in V \setminus T$ is {\em feasible} for position $(F,V,T)$
if assigning all variables in $T\cup\{x\}$ to \texttt{true} does not make $F$ {\texttt{true}}. 
If $x$ is feasible, then the position resulting from that move is $(F,V,T\cup\{x\})$.
Under normal play, the next player loses if the position has no feasible move.
\end{gameruleset}

\begin{theorem}[\cite{BurkeFerlandTengQCGT}]
\label{theo:AllAboutAvoidTrue}
{\sc Demi-Quantum Boolean Nim} and {\sc Avoid True} are isomorphic games.
\end{theorem}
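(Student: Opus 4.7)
The plan is to construct an explicit bijection between positions of \ruleset{Demi-Quantum Boolean Nim} and positions of \ruleset{Avoid True} that carries feasible moves to feasible moves and commutes with the transition function, yielding the game-tree isomorphism required by the definition of isomorphic games given in Section 2. I will work through the \ruleset{Crosswise AND} reformulation established in the preceding proposition, so the starting data is an $m \times n$ Boolean matrix $\mathbf{B}$. The natural mapping is to let the $n$ columns correspond to a variable set $V = \{x_1, \ldots, x_n\}$ and the $m$ rows correspond to positive clauses $C_1, \ldots, C_m$ of a CNF $F$, with $x_j \in C_i$ iff $\mathbf{B}[i,j] = 0$, and initial $T = \emptyset$. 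Intuitively, each row records which variables are \emph{absent} from its clause.

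To extend the bijection to positions reached mid-game, I would first normalize both sides. On the \ruleset{Avoid True} side, any state $(F, V, T)$ is game-equivalent to one with the ``dead'' parts removed: clauses $C_i$ with $C_i \cap T \neq \emptyset$ are permanently satisfied and can be dropped, and variables in $T$ are never again selectable. On the \ruleset{Crosswise AND} side, a row that has become all-zero (a collapsed realization, in the numerical interpretation of collapse used in the previous proof) plays no further role and can be deleted, while a column that is all-zero is permanently infeasible. Under this normalization the bijection is a transparent pairing of live rows with live clauses and live columns with live variables.

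I would then verify that the move dynamics match on both sides. Selecting column $j$ is feasible in \ruleset{Crosswise AND} iff some row has $\mathbf{B}[i,j] = 1$, i.e., some live clause $C_i$ does \emph{not} contain $x_j$; this is exactly the condition that $T \cup \{x_j\}$ leaves $C_i$ unsatisfied, which is precisely the feasibility condition for $x_j$ in \ruleset{Avoid True}. The effect of the move also matches: rows with $\mathbf{B}[i,j] = 0$ (clauses containing $x_j$) AND down to all zeros (the clause becomes satisfied and is removed from live clauses), rows with $\mathbf{B}[i,j] = 1$ (clauses not containing $x_j$) are unchanged, and column $j$ is zeroed (variable $x_j$ moves into $T$ and is never again selectable). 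This is exactly the update $T \mapsto T \cup \{x_j\}$ on the \ruleset{Avoid True} side.

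The main subtlety is getting the polarity right. In the \ruleset{Crosswise AND} encoding, a zero entry corresponds to the variable being \emph{present} in the clause---not absent---because a zero is what triggers the AND-collapse of the row, matching the \ruleset{Avoid True} event ``the variable appears in the clause, so adding it to $T$ satisfies the clause.'' I would sanity-check this against the \ruleset{Crosswise OR} view, where a purple ($1$) entry should correspond to variable presence: a column of all $1$s, which is the infeasibility condition in \ruleset{Crosswise OR}, represents a variable appearing in every live clause, and picking such a variable would indeed render $F$ true, matching infeasibility in \ruleset{Avoid True}. Once the polarity is pinned down, the rest of the verification is routine bookkeeping and the isomorphism follows.
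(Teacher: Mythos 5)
Your proposal is correct and takes essentially the same route as the paper's proof: the paper uses the identical encoding ($\mathbf{B}[i,j]=0$ iff $x_j \in C_i$, i.e., empty piles mark clause membership), the same ``numerical'' treatment of collapsed realizations as all-zero rows, and the same device of zeroing the columns of variables in $T$; your mid-game normalization is just a unified restatement of the paper's two reduction directions (position $\mathbb{B}$ to $(F_{\mathbb{B}},V,\emptyset)$ and $(F,V,T)$ to $\mathbb{B}_{(F,V,T)}$ with $T$-satisfied clauses dropped). The polarity subtlety you flag is exactly the crux the paper's examples illustrate, and you resolve it the same way.
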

\begin{proof} Part of the proof in \cite{BurkeFerlandTengQCGT} 
showing that {\sc Quantum Nim} is $\Sigma_2$-hard also establishes the above theorem.
Because establishing this theorem is not the main focus of \cite{BurkeFerlandTengQCGT},
we reformulate the proof here to make this theorem more explicit
as well as to provide a complete background of our discussion in this section.

We first establish the direction from {\sc Demi-Quantum Boolean Nim} to {\sc Avoid True}.
Given a position $\mathbb{B}$ in {\sc Demi-Quantum Boolean Nim}, 
we can create a  {\texttt or}-clause from each realization in $\mathbb{B}$.
Suppose $\mathbb{B}$ has $m$ realizations and $n$ piles.
We introduce $n$ Boolean variables, $V = \{x_1,...,x_n\}$.
For each realization in $\mathbb{B}$, the \texttt{or}-clause 
consists of all variables corresponding to piles with zero pebbles.
The reduced CNF $F_{\mathbb{B}}$ is the \texttt{and} of all these \texttt{or}-clauses.
Because taking a pebble from a pile collapses a realization for which the pile has no pebble is mapped to selecting the corresponding 
Boolean variable making the \texttt{or}-clause associated with the realization \texttt{true},
playing {\sc Demi-Quantum Boolean Nim} starting position 
$\mathbb{B}$ is isomorphic to playing {\sc Avoid True} starting at position $(F_{\mathbb{B}},V,\emptyset)$.
Note that the reduction can be set up in  polynomial time.

For example, consider this \ruleset{Demi-Quantum Boolean Nim} position (with heaps (columns) labelled by their indices and realizations (rows) labelled $A$, $B$, and $C$):

$$\begin{pNiceMatrix}[first-row,last-col=8] 
 1 & 2 & 3 & 4 & 5 & 6 & 7 &   \\
 1 & 0 & 0 & 1 & 1 & 0 & 1 & A \\
 0 & 1 & 0 & 1 & 1 & 1 & 0 & B \\
 1 & 0 & 0 & 1 & 1 & 1 & 0 & C \\
\end{pNiceMatrix}$$

This reduces to the \ruleset{Avoid True} position with formula:

$$\underbrace{(x_2 \vee x_3 \vee x_6 \vee x_7)}_A \wedge  \underbrace{(x_1 \vee x_3 \vee x_7)}_B \wedge \underbrace{(x_2 \vee x_3 \vee x_7)}_C $$

and $T = \emptyset$.  The three clauses are labelled by their respective realization.  Those variables that appear in each clause are those with a zero in the matrix. Notice that:

\begin{itemize}
  \item The third heap is empty in all \ruleset{Nim} realizations, so no player can legally play there.  That is the same in the resulting \ruleset{Avoid True} position; no player can pick $x_3$ as it is in all clauses and would make the formula true.
  \item Heaps 4 and 5 have a pebble in all three realizations in \ruleset{Nim}, so a player can play in either of them without any collapses.  Because of this, those Boolean variables don't occur in any of the \ruleset{Avoid True} clauses.  
\end{itemize}

For the reverse direction, 
  consider an {\sc Avoid True} position $(F,V,T)$.
Assume $V = \{x_1,...,x_n\}$, and $F$ has $m$ clauses, $C_1,...,C_m$.
We reduce it to a {\sc Boolean Nim} superposition $\mathbb{B}_{(F,V,T)}$
with $m$ realizations and $n$ piles.
In the realization for $C_i$, we set piles corresponding to variables in $C_i$ zero to set up the mapping between collapsing the realization with  making the clause \texttt{true}.
We also set all piles associated with variables in $T$ to zero, to
set up the mapping between collapsing the realization with selecting a selected variable.
Again, we can use these two mappings to inductively establish that 
  the game tree for {\sc Demi-Quantum Boolean Nim}  
  at $\mathbb{B}_{(F,V,T)}$ is isomorphic 
  to the game tree for {\sc Avoid True}
  at $(F,V,T)$.
Note that the reduction also runs in polynomial time.

We demonstrate this reduction on the following \ruleset{Avoid True} position, with formula:

$$\underbrace{(x_1 \vee x_2 \vee x_3 \vee x_4)}_A \wedge \underbrace{(x_1 \vee x_5 \vee x_6 \vee x_7)}_B \wedge \underbrace{(x_1 \vee x_3 \vee x_6)}_C \wedge \underbrace{(x_2 \vee x_5 \vee x_8)}_D$$

and already-chosen variables, $T = \{x_8\}$.  Following the reduction, we produce the following \ruleset{Demi-Quantum Boolean Nim} position:

$$\begin{pNiceMatrix}[first-row,last-col=9] 
 1 & 2 & 3 & 4 & 5 & 6 & 7 & 8 &   \\
 0 & 0 & 0 & 0 & 1 & 1 & 1 & 0 & A \\
 0 & 1 & 1 & 1 & 0 & 0 & 0 & 0 & B \\
 0 & 1 & 0 & 1 & 1 & 0 & 1 & 0 & C \\
\end{pNiceMatrix}$$

Note, since $x_8$ has already been made-true ($x_8 \in T$):

\begin{itemize}
    \item The 8th column is all zeroes, and
    \item Since $x_8$ appears in clause $D$, that clause does not have a corresponding realization in the quantum superposition (i.e. row in the matrix).
\end{itemize}

\end{proof}

Theorem \ref{theo:AllAboutAvoidTrue} presents the following 
``rechargeable'' bipartite-graph interpretation of {\sc Demi-Quantum Boolean Nim}.

\begin{gameruleset}[{\sc Rechargeable Bipartite Boolean Nim}]
The game is defined by a bipartite graph $G = (U,V,E)$  (with edges $E$ between vertex sets $U$ and $V$) and 
   a vertex $s\in U$, 
   where there is a {\sc Nim} pile of one pebble at each vertex in $U\setminus \{s\}$; piles at $s$ have $0$ pebbles.  
The game start at $s$.
During the game, the players takes turns to search a pebble reachable from
 the current vertex (initially $s$) via a node in $V$ (we call it a ``station''): 
They first cross to a node in $V$ in order to reach to a vertex in 
$s'\in U$ whose Nim pile contains a pebble. 
However, the selection of a pebble only ``recharge'' the stations  connected to $s'$.
All other stations of $V$ got powered-off and permanently lost their connections to piles in $U$.
The player who cannot find the next ``power pebble''/``energy pebble'' loses the game.
\end{gameruleset}

Theorem \ref{theo:AllAboutAvoidTrue} also gives the following simple hyper-graph interpretation of {\sc Demi-Quantum Boolean Nim}.
Recall a hyper-graph $H$ over a groundset $V=[n]$ is a collection of subsets in $V$. We write $H =(V,E)$, where for each hyper-edge $e\in E$, 
$e\subseteq V$.  

\begin{gameruleset}[{\sc  Rechargeable Hypergraph  Boolean Nim}]
The game position is defined by hyper-graph $H=(V,E)$, in which each vertex $v\in V$ has a {\sc Nim} pile of $1$ pebble, and a vertex $c\in V$.
The next player needs to move to a non-empty vertex $v$ connected to $c$ by an hyper-edge, taking its pebble which also removes all hyper-edges  without $v$ from the hyper-graph (the ``energy pebble'' only recharge hyper-edges incident to $v$).
The player who cannot find the next `pebble loses the game.
\end{gameruleset}

Because both {\sc Rechargeable Bipartite Boolean Nim} and 
{\sc  Rechargeable Hypergraph  Boolean Nim} of simply the graph-theoretical interpretation of the proof for Theorem \ref{theo:AllAboutAvoidTrue}, the proof also establish the following:

\begin{corollary}
{\sc Rechargeable Bipartite Boolean Nim} and 
{\sc  Rechargeable Hypergraph  Boolean Nim} are isomorphic to 
{\sc Transverse Wave} and are therefore PSPACE-complete combinatorial games.
\end{corollary}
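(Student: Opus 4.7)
The plan is to leverage Theorem \ref{theo:AllAboutAvoidTrue} together with the earlier Proposition that identifies \ruleset{Demi-Quantum Boolean Nim} with \ruleset{Crosswise AND} (hence with \ruleset{Transverse Wave}). Once we exhibit game-tree-preserving bijections from each rechargeable game to \ruleset{Demi-Quantum Boolean Nim}, isomorphism with \ruleset{Transverse Wave} follows by composition, and PSPACE-completeness transfers immediately. Concretely, I would pass every reduction through the same Boolean-matrix $\mathbf{B}$ that already drives the proof of Theorem \ref{theo:AllAboutAvoidTrue}, so that the three ``rechargeable / crosswise / demi-quantum'' viewpoints are literally the same matrix dressed in different clothes.

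For \ruleset{Rechargeable Bipartite Boolean Nim} on $(G=(U,V,E), s)$, I would define $\mathbf{B}$ by indexing columns with $U \setminus \{s\}$ and rows with $V$ (the ``stations''), and setting $\mathbf{B}[v, u] = 1$ iff $(u,v) \in E$. A legal move picks a $U$-vertex $u'$ reached via some recharged station $v$; since $u'$ still carries its pebble and is adjacent to a live $v$, exactly some surviving row of $\mathbf{B}$ has a $1$ in column $u'$, matching the legality condition for \ruleset{Demi-Quantum Boolean Nim}. Deactivating every station not adjacent to $u'$ coincides with collapsing precisely the rows that have a $0$ in column $u'$, and emptying $u'$'s pile matches the fact that after the move column $u'$ is all zeros. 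Inductively the game trees match.

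For \ruleset{Rechargeable Hypergraph Boolean Nim} on $(H=(V,E), c)$, I would dually encode $\mathbf{B}$ with rows indexed by the hyperedges $e \in E$, columns by vertices $v \in V$, and $\mathbf{B}[e,v] = 1$ iff $v \in e$. The role of the current vertex $c$ is exactly to track which realizations are still alive: the surviving hyperedges are those containing $c$, i.e.\ the rows with a $1$ in column $c$. Picking a vertex $v$ lying in some hyperedge through $c$ translates to picking a column $v$ that has a $1$ in some live row; deleting every hyperedge not containing $v$ translates to collapsing every row with a $0$ in column $v$; and replacing $c$ by $v$ updates the ``live rows'' bookkeeping to match the next step of demi-quantum play. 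Again this is a step-for-step bijection of game trees.

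The main obstacle, and the step I would want to be most careful about, is the bookkeeping around the distinguished vertices $s$ and $c$ and the starting conventions: a generic position of \ruleset{Demi-Quantum Boolean Nim} need not arise from a ``fresh'' bipartite or hypergraph setup, so I would make the correspondence work for arbitrary reachable positions, not just starting ones, by (if needed) adjoining a sentinel station/vertex that is adjacent to every realization and carries no pebble. Verifying that this sentinel is genuinely inert — it never becomes a legal target and its presence does not alter feasibility of any other move — is the only nontrivial check. With this verified, both isomorphisms are immediate graph-theoretic repackagings of the proof of Theorem \ref{theo:AllAboutAvoidTrue}, and PSPACE-completeness of the two rechargeable games follows from that of \ruleset{Transverse Wave}.
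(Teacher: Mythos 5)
Your proposal is correct and follows essentially the same route as the paper: the paper's own one-sentence proof simply observes that both rechargeable games are graph-theoretic reinterpretations of the correspondence in Theorem~\ref{theo:AllAboutAvoidTrue}, with stations/hyperedges playing the role of realizations (rows) and piles/vertices the role of heaps (columns)---exactly the Boolean-matrix dictionary you spell out before composing with the already-established isomorphism to {\sc Transverse Wave}. Your added care about the distinguished vertices $s$ and $c$ (the inert, pebble-free sentinel adjacent to every realization, so that all rows start ``live'') is precisely the bookkeeping the paper leaves implicit, and you handle it correctly.
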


\subsection{A Social Influence Game Motivated by {\sc Demi-Quantum Nim}}

\label{Sec:DI}

The connection between {\sc Demi-Quantum Boolean Nim} and {\sc Friend Circle} motivates the following social-influence-inspired game, 
{\sc Demographic Influence}, which mathematically generalizes 
{\sc Demi-Quantum Nim}.
In a nutshell, the setting has a demographic structure 
  over a population, in which individuals have their own friend circles.
Members in the population are initially \textit{un-influenced} but \textit{receptive} to ads, and (viral marketing) 
influencers try to target their ads at demographic groups to influence
the population.
People can be influenced either by influencers' ads or by 
``enthusiastic endorsement'' cascading through friend circles.

The following combinatorial game is distilled from the above scenario.

\begin{gameruleset}[{\sc Demographic Influence}], and
    \item $D = \{D_1,...,D_m\}$ is a collection of subsets of $V$, representing the demographic groups, which can overlap. if $\Theta(v) < 0$, then $v$ is \emph{strongly influenced}.
A {\sc Demographic Influence} position
  is defined by a tuple $Z = (G, \Theta, D)$, where
\begin{itemize}
  \item $G = (V, E)$ is an undirected graph representing a symmetric social network on a population.
  \item $\Theta: V\rightarrow \mathbb{Z}$ represents how resistant each individual is to the product.  (I.e., their threshold to being influenced.)
  \begin{itemize}
    \item If $\Theta(v) > 0$, then $v$ is \emph{uninfluenced},
    \item if $\Theta(v) = 0$, then $v$ is \emph{weakly influenced}, and
    \item if $\Theta(v) < 0$, then $v$ is \emph{strongly influenced}.
   \end{itemize}
    \item $D = \{D_1, \ldots, D_m\}$ is the set of demographics, each a subset of $V$.
\end{itemize}

A player's turn consists of choosing a demographic, $D_k$ and the amount they want to influence, $c > 0$ where $\exists v \in D_k$ where $\Theta(v) \geq c$.  (Since $c > 0$, there must be an uninfluenced member of $D_k$.)

\begin{itemize}
  \item $\Theta(v)$ decreases by $c$ (all individuals are influenced by $c$) and
  \item If $\Theta(v)$ became negative by this subtraction (if it went from $\mathbb{Z}^+ \cup \{0\}$ to $\mathbb{Z}^-$), then $\forall x \in N_G(v): \Theta(x) = -1$.  (Thematically, if one individual becomes strongly influenced directly by the marketing campaign, then they enthusiastically recommend it to their friends and strongly influence them as well).
\end{itemize}
Importantly, we perform all the subtractions and determine which individuals are newly-strongly influenced, before they go and strongly influence their friends.

Note that when influencing a demographic, $D_k$, since $c$ cannot be greater than the highest threshold, that highest-threshold individual will not be strongly influenced by the subtraction step.  (If one of their neighbors does get strongly influenced, then they will be strongly influenced in that manner.)

Since a player needs to make a move on a demographic group with at least one uninfluenced individual, the game ends when there are no remaining groups to influence.

\end{gameruleset}

\begin{figure}[h!]
\begin{center}\begin{tikzpicture}
  \node[draw, circle] (v1) [label=above:{$v_1$}] {7};
  \node[draw, circle] (x0b) [above right=of v1] {0};
  \node[draw, circle] (xm1) [above left=of v1] {$-1$};
  \node[draw, circle] (xm2) [left=of v1] {-2};
  \node[draw, circle] (x4) [below=of v1] {4};
  \node[draw, circle] (x2) [below right=of v1] {2};
  \node[draw, circle] (v3) [below=of x2, label=below:{$v_3$}] {4};
  \node[draw, circle] (x3b) [left=of v3] {3};
  \node[draw, circle] (v2) [left=of x4, label=right:{$v_2$}] {2};
  \node[draw, circle] (x6) [below=of v2] {6};
  
  \node[] (lbl) [below=of x3b] {a};
  
  \draw[-]
    (v1) edge (x0b)
    (v1) edge (xm1)
    (v1) edge (xm2)
    (v1) edge (x4)
    (v1) edge (x2)
    (v3) edge (x2)
    (v3) edge (x4)
    (v3) edge (x3b)
    (v2) edge (x3b)
    (v2) edge (x6)
    (v2) edge (xm2)
    ;
\end{tikzpicture}\hspace{.5cm}\begin{tikzpicture}
  \node[draw, circle] (v1) [label=above:{$v_1$}] {3};
  \node[draw, circle] (x0b) [above right=of v1] {0};
  \node[draw, circle] (xm1) [above left=of v1] {$-1$};
  \node[draw, circle] (xm2) [left=of v1] {-2};
  \node[draw, circle] (x4) [below=of v1] {4};
  \node[draw, circle] (x2) [below right=of v1] {2};
  \node[draw, circle] (v3) [below=of x2, label=below:{$v_3$}] {0};
  \node[draw, circle] (x3b) [left=of v3] {3};
  \node[draw, circle] (v2) [left=of x4, label=right:{$v_2$}] {-2};
  \node[draw, circle] (x6) [below=of v2] {6};
  
  \node[] (lbl) [below=of x3b] {b};
  
  \draw[-]
    (v1) edge (x0b)
    (v1) edge (xm1)
    (v1) edge (xm2)
    (v1) edge (x4)
    (v1) edge (x2)
    (v3) edge (x2)
    (v3) edge (x4)
    (v3) edge (x3b)
    (v2) edge (x3b)
    (v2) edge (x6)
    (v2) edge (xm2)
    ;
\end{tikzpicture}\hspace{.5cm}\begin{tikzpicture}
  \node[draw, circle] (v1) [label=above:{$v_1$}] {3};
  \node[draw, circle] (x0b) [above right=of v1] {0};
  \node[draw, circle] (xm1) [above left=of v1] {$-1$};
  \node[draw, circle] (xm2) [left=of v1] {-1};
  \node[draw, circle] (x4) [below=of v1] {4};
  \node[draw, circle] (x2) [below right=of v1] {2};
  \node[draw, circle] (v3) [below=of x2, label=below:{$v_3$}] {0};
  \node[draw, circle] (x3b) [left=of v3] {-1};
  \node[draw, circle] (v2) [left=of x4, label=right:{$v_2$}] {-2};
  \node[draw, circle] (x6) [below=of v2] {-1};
  
  \node[] (lbl) [below=of x3b] {c};
  
  \draw[-]
    (v1) edge (x0b)
    (v1) edge (xm1)
    (v1) edge (xm2)
    (v1) edge (x4)
    (v1) edge (x2)
    (v3) edge (x2)
    (v3) edge (x4)
    (v3) edge (x3b)
    (v2) edge (x3b)
    (v2) edge (x6)
    (v2) edge (xm2)
    ;
\end{tikzpicture}\end{center}

\caption{A \ruleset{Demographic Influence} move, influencing $D_3 = \{v_1, v_2, v_3\}$ by 4.  (a) shows $G$, $\Theta$, and $D_3$ prior to making the move.  (b) shows the first part of the move: subtracting from the thresholds of $v_1$, $v_2$, and $v_3$.  (c) since $v_2$ went negative, it's neighbors are set to $-1$ to show that they have been strongly influenced as well.  (The magnitude of negativity doesn't matter, so it's okay that the vertex at -2 goes back to -1.)}
\end{figure}
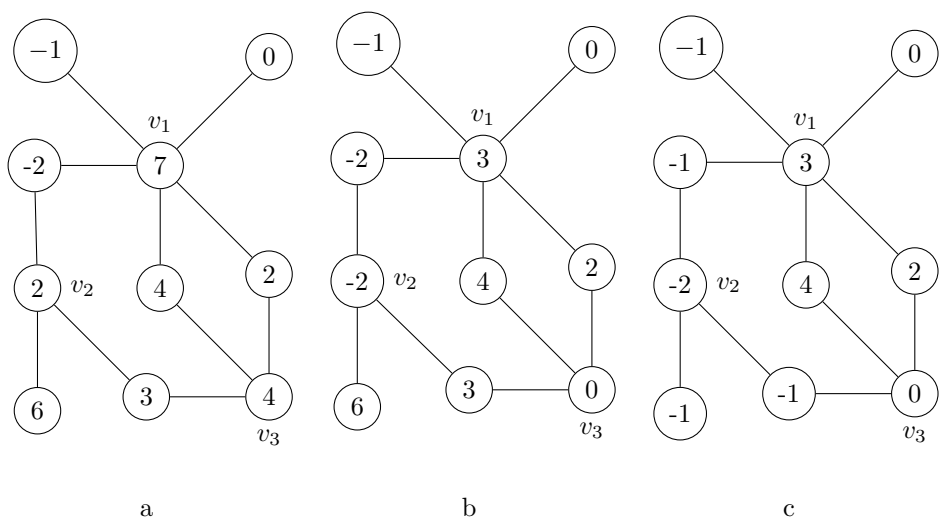

The following theorem shows that {\sc Demographic Influence} 
generalizes {\sc Demi-Quantum Nim}.

\begin{theorem}[{\sc Demi-Quantum Nim} Generalization: Social-Influence~Connection] \label{Theo:DemographicInfluence}
{\sc Demographic Influence} contains {\sc Demi-Quantum Nim} as a Special Case.
Therefore, {\sc Demographic Influence} is a \cclass{PSPACE}-complete game.
\end{theorem}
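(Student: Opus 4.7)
The plan is to embed {\sc Demi-Quantum Nim} into {\sc Demographic Influence} by a direct, game-tree-preserving construction; PSPACE-completeness then chains through Theorem~\ref{theo:AllAboutAvoidTrue}. Given a {\sc Demi-Quantum Nim} superposition specified by an $s \times n$ non-negative integer matrix $\mathbf{M}$, construct $Z = (G, \Theta, D)$ with vertex set $V = \{v_{i,j} : i \in [s], j \in [n]\}$, thresholds $\Theta(v_{i,j}) = \mathbf{M}[i,j]$, one demographic per heap $D_j = \{v_{1,j}, \ldots, v_{s,j}\}$, and edge set $E$ in which each row $\{v_{i,1}, \ldots, v_{i,n}\}$ induces a clique while distinct rows share no edges. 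The intuition is that row-cliques carry a realization's collapse in a single cascade step, while $D_j$ bundles all copies of heap $j$ so that one subtraction from $D_j$ simulates removing pebbles from heap $j$ in every surviving realization at once.

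The move bijection is $(j,c) \leftrightarrow (D_j, c)$. Feasibility matches, since the Nim move $(j,c)$ is feasible iff some $\mathbf{M}[i,j] \geq c$, iff some $v_{i,j} \in D_j$ has $\Theta(v_{i,j}) \geq c$. For the transition, every $\Theta(v_{i,j})$ decreases by $c$; realizations with $\mathbf{M}[i,j] \geq c$ retain non-negative $\Theta(v_{i,j})$ and no cascade fires, mirroring the surviving Nim case. Realizations with $\mathbf{M}[i,j] < c$ force $\Theta(v_{i,j})$ from non-negative to negative, triggering the cascade that, via the row-$i$ clique, pins $\Theta(v_{i,k}) = -1$ for every $k \neq j$; the whole row is then frozen at thresholds $\leq -1$, the {\sc Demographic Influence} analogue of a collapsed realization. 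Frozen rows are permanently unplayable since a legal $c > 0$ requires some $\Theta \geq c > 0$, and subsequent subtractions on already-negative vertices cannot re-fire the cascade (the rule only activates on a non-negative-to-negative transition).

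Induction on game-tree height then lifts this per-move correspondence to an isomorphism of game trees, matching outcome classes. PSPACE-hardness follows since the Boolean restriction of $\mathbf{M}$ is {\sc Demi-Quantum Boolean Nim}, identified with the PSPACE-complete {\sc Avoid True} by Theorem~\ref{theo:AllAboutAvoidTrue}. PSPACE-membership comes from the bounded game-tree height and standard DFS, as for {\sc Transverse Wave}. The main obstacle will be the precise cascade bookkeeping: verifying that row-cliques in distinct realizations do not couple (they do not, by vertex-disjointness of rows), that uniform subtraction on $D_j$ across already-negative vertices does not spuriously re-fire the cascade (ruled out by the transition clause), and that the highest-threshold vertex in the chosen $D_j$---explicitly flagged in the ruleset as never directly pushed negative by the subtraction step---is only ever strongly influenced as a cascade-recipient from a row-neighbor whose heap went subcritical, exactly mirroring a Nim realization collapsing through a different heap.
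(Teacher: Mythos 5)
Your construction is exactly the paper's own: vertices indexed by (realization, heap) pairs with thresholds equal to pebble counts, row-cliques to carry collapses, column demographics $D_j$ to simulate heap moves, and the move bijection $(j,c) \leftrightarrow (D_j,c)$, with the same case analysis (survive / weakly influenced / cascade-collapse) lifted to a game-tree isomorphism by induction. The proposal is correct and, if anything, slightly more explicit than the paper about frozen rows never re-firing cascades and about the \cclass{PSPACE} membership/hardness chain through Theorem~\ref{theo:AllAboutAvoidTrue}.
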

\begin{proof}
For every {\sc Demi-Quantum Nim} instance $Z$ with $m$ realizations of $n$ piles, we construct the following
{\sc Demographic Influence} instance $Z'$, in which,
(1) $V = \{(r,c)\ | \ r\in [m], c\in [n]\}$.
(2) $E = \{((r_1, c_1), (r_2, c_2))\ |\ r_1 = r_2\}$ (i.e. vertices from all piles from the same realization are a clique),
(3) $\forall (r,c)\in V$, $\Theta((r,c))$ is set to be the number of pebbles that $c^{th}$ pile has in $r^{th}$ realization of {\sc Nim}.
(4) $D = (D_1,...,D_n)$, where $D_c = \{(r,c)\ | \ r\in [m]\}$, i.e., nodes associated with the $c^{th}$ {\sc Nim} pile.

We claim that $Z$ and $Z'$ are isomorphic games.
Imagine the two games are played in tandem.
Suppose the player in {\sc Demi-Quantum Nim} $Z$ makes a move $(k,q)$, that is, removing $q$ pebbles from pile $k$.
In its {\sc Demographic Influence} counterpart, $Z'$, the corresponding player also plays $(k,q)$, meaning to invest $q$ units in demographic group
$k$.
Note that in $Z$, $(k,q)$ is feasible iff in at least one of the realizations, the $k^{th}$ {\sc Nim} pile has at least $q$ pebbles.
This is same as $q \leq \max_{i\in [m]} \Theta((i,q))$.
Therefore, $(k,q)$ is feasible in $Z$ iff $(k,q)$ is feasible in $Z'$.

When $(k,q)$ is feasible, then for any realization $i\in [m]$, 
there are three cases:
(1) if the $k^{th}$ {\sc Nim} pile has more pebbles than $q$, then 
in that realization, a classical transition is made, i.e., the $q$ pebbles are removed from the pile.
This corresponds to the reduction of the threshold at node $(i,k)$ by $q$. 
(2) if the $k^{th}$ {\sc Nim} pile has exactly $q$ pebbles,
then all pebbles are removed from the pile.
This corresponds to the case that node $(i,k)$ becomes weakly influenced.
(3) if $q$ is more than the number of pebbles in the $k^{th}$ {\sc Nim} pile, then the move collapses realization $i$.
This corresponds to the case in {\sc Demographic Influence}
where $(i,k)$ become strongly influenced, and then strongly influences
all other vertices in the same row ($i$).
Therefore, $Z$ and $Z'$ are isomorphic games, with connection 
between the collapse of a realization in the quantum version 
and the cascading of influence by endorsement in friend circle. 
\end{proof}


The proof of Theorem \ref{Theo:DemographicInfluence} illustrates that
 {\sc Demographic Influence} can be viewed as a graph-theoretical extension of {\sc Nim}.
Recall Burke-George's {\sc Neighboring Nim}, which extends both 
  the classical {\sc Nim} (when the underlying graph is a clique)
  and {\sc Undirected Geography} (when all {\sc Nim} heaps have at most one item in them, i.e. \ruleset{Boolean Nim}).
The next theorem complements Theorem \ref{Theo:DemographicInfluence}
by showing that {\sc Demographic Influence} also 
generalizes {\sc Node-Kayles}.

\begin{theorem}[Social-Influence Connection with {\sc Node-Kayles}]
{\sc Demographic Influence} contains  {\sc Node-Kayles} as a special case.
\end{theorem}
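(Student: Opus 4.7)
The plan is to exhibit a polynomial-time embedding that maps any Node-Kayles position to a Demographic Influence position with an isomorphic game tree. Given an NK instance $G_0 = (V_0, E_0)$, I would build a DI instance $(G, \Theta, D)$ as follows: take $V = V_0 \cup \{w_v : v \in V_0\}$ by adjoining one ``trigger'' vertex $w_v$ for each NK vertex; set $\Theta(v) = 0$ for every $v \in V_0$ and $\Theta(w_v) = 1$ for every auxiliary; create one demographic $D_v = \{v, w_v\}$ per NK vertex; and install edges $E = \{(v, w_u) : (u,v) \in E_0\}$, so that each original vertex is adjacent to the triggers of its NK-neighbors and the resulting $G$ is bipartite between $V_0$ and the trigger side.

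First I would show that the influence amount $c$ is forced to equal $1$ on every move. All initial thresholds lie in $\{0, 1\}$, and both the subtraction rule and the propagation-to-$-1$ rule can only leave $\Theta \leq 1$ everywhere. The feasibility constraints $c > 0$ and $c \leq \max_{u \in D_k} \Theta(u) \leq 1$ therefore pin $c = 1$ throughout the game.

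Next I would argue that $D_v$ is feasible in DI if and only if $v$ is selectable in NK. Since every $v \in V_0$ starts at $\Theta(v) = 0$ and its threshold can only be modified by subtraction when $D_v$ itself is played (the bipartite structure of $G$ keeps $V_0$ out of every propagation target, because each newly-strongly-influenced vertex arises in $V_0$ and so its $G$-neighborhood lies entirely in the trigger side), feasibility of $D_v$ reduces to $\Theta(w_v) \geq 1$. In turn, $\Theta(w_v)$ is modified only by (a) playing $D_v$ directly or (b) propagation from a newly-strongly-influenced $u \in N_{G_0}(v)$ via the edge $(u, w_v)$ induced by $(u,v) \in E_0$; these are exactly the two events that disable $v$ in Node-Kayles. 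I would then verify that the effect of playing $D_v$ mirrors the NK move at $v$: the subtraction sends $\Theta(v)$ from $0$ to $-1$ (newly strongly influenced), which fires propagation across $N_G(v) = \{w_u : u \in N_{G_0}(v)\}$ and sets each such trigger to $-1$. This simultaneously disables $D_v$ and every $D_u$ for $u \in N_{G_0}(v)$, while leaving each $D_x$ with $x \notin N_{G_0}(v) \cup \{v\}$ completely untouched.

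Combining these observations, the map $v \mapsto D_v$ (with the forced choice $c = 1$) is a bijection between the legal moves available at every reachable state, and therefore the two game trees are isomorphic, witnessing Node-Kayles as a special case of Demographic Influence. The step I expect to require the most care is controlling the $c$-parameter: a naive gadget in which the auxiliary vertex carries a large threshold would allow a player to pick a small $c$ and ``stall'' without triggering propagation, which breaks both parity and the move bijection. Forcing every threshold into $\{0, 1\}$ is precisely what removes this extra degree of freedom and makes the reduction go through cleanly.
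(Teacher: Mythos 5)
Your proof is correct and takes essentially the same approach as the paper's: adjoin one threshold-$1$ trigger vertex per \ruleset{Node Kayles} vertex, set every original vertex to threshold $0$, pair each vertex with its trigger in a two-element demographic, note that $c$ is forced to equal $1$, and let the propagation from the newly strongly influenced original vertex disable exactly the neighboring demographics. The only difference is your edge set---the paper additionally joins $v$ to its original neighbors and to its own trigger $t_v$, whereas your leaner bipartite gadget connects $v$ only to the triggers of its $G_0$-neighbors---and both variants work, since feasibility of $D_v$ is governed solely by whether the trigger's threshold is still $1$.
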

\begin{proof}
Consider a {\sc Node-Kayles} instance defined by an $n$-node undirected graph $G_0 = (V_0,E_0)$ with $V_0 = [n]$.
We define a {\sc Demographic Influence} instance 
$Z=(G,\Theta,D)$ as the following.
(1) For each $v\in V_0$, we introduce a new vertex $t_v$. 
Let $V= V_0\cup T_0$, where $T_0 :=\{t_v\ |\ v\in V_0\}$.
(2) For all $v\in V$, $\Theta(v) = 0$ and $\Theta(t_v) = 1$.
(3) For all $v\in V$, 
$C(v) = N_G(v) \cup \{t_w\ |\ w \in N_G(v)\}\cup \{t_v\}$ and $C(t_v)=\{v\}$.
(4)$D = \{D_1,...,D_n\}$, where $D_v = \{v, t_v\}, \forall v\in V$.

Note that because  $\Theta(v)\in \{0,1\}, \forall v\in V$, the
space of moves in this {\sc Demographic Influence}
is $\{(v,1)\ | v\in [n]\}$, whereas in {\sc Node-Kayles}, a move consists of selecting one of the vertices from $[n]$.

We now show that {\sc Demographic Influence} over $Z$ is isomorphic to
{\sc Node-Kayles} over $G$ under the mapping of moves $(v, 1) \Leftrightarrow v$.

For a \ruleset{Node Kayles} move at $v$, it removes $v$ and all $x \in N_{G_0}(v)$ from future move choices.  In \ruleset{Demographic Influence}, choosing the $(v, 1)$ move means that $\Theta(t_v)$ becomes 0 and $\Theta(v)$ becomes -1.  $v$ is now strongly influenced and they strongly influence their neighbors at $N_G(v) = N_{G_0}(v) \cup \{t_x\ |\ x \in N_{G_0}(v)\}$, so all those vertices also gain a threshold of -1.  Those include both the $x$ and $t_x$ vertices for each $x \in N_{G_0}(v)$, so it removes all those neighboring demographics from future moves, $(x, 1)$, the set of which is isomorphic to those removed from the corresponding \ruleset{Node Kayles} move.

Therefore, with induction, we establish that
{\sc Demographic Influence} over $Z$ is isomorphic to
{\sc Node-Kayles} over $G$ under the mapping
 of moves from $v \Leftrightarrow (v,1)$.
\end{proof}

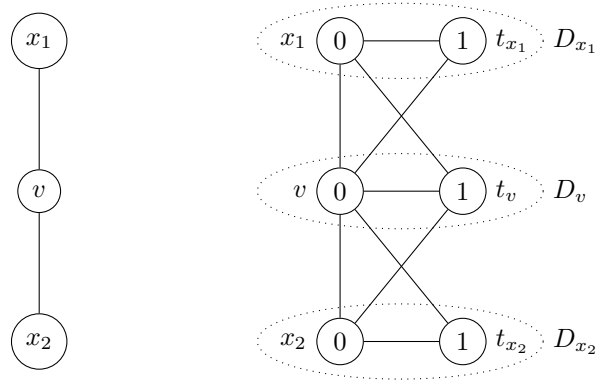
\begin{figure}[h!]
\begin{center}\begin{tikzpicture}
  \node[draw, circle] (v)  {$v$};
  \node[draw, circle] (x1) at (0, 2) {$x_1$};
  \node[draw, circle] (x2) at (0, -2) {$x_2$};
  
  \draw
    (v) edge (x1)
    (v) edge (x2)
    ;
  
  \node[draw, circle] (vb) at (4, 0) [label=left:{$v$}] {0};
  \node[draw, circle] (tv) [right=of vb, label=right:{$t_v$}] {1};
  \node[draw, ellipse, minimum height=1cm, minimum width=3.8cm, dotted] at (4.8, 0) [label=right:{$D_v$}] {};
  \node[draw, circle] (x1b) at (4, 2) [label=left:{$x_1$}] {0};
  \node[draw, circle] (tx1) [right=of x1b, label=right:{$t_{x_1}$}] {1};
  \node[draw, ellipse, minimum height=1cm, minimum width=3.8cm, dotted] at (4.8, 2) [label=right:{$D_{x_1}$}] {};
  \node[draw, circle] (x2b) at (4, -2) [label=left:{$x_2$}] {0};
  \node[draw, circle] (tx2) [right=of x2b, label=right:{$t_{x_2}$}] {1};
  \node[draw, ellipse, minimum height=1cm, minimum width=3.8cm, dotted] at (4.8, -2) [label=right:{$D_{x_2}$}] {};
  
  \draw
    (vb) edge (tv)
    (vb) edge (tx1)
    (vb) edge (tx2)
    (vb) edge (x1b)
    (vb) edge (x2b)
    (x1b) edge (tx1)
    (x2b) edge (tx2)
    (tv) edge (x1b)
    (tv) edge (x2b)
    ;
\end{tikzpicture}\end{center}

\caption{Example of reduction from \ruleset{Node Kayles} to \ruleset{Demographic Influence}.  }
\label{fig:kaylesDemographicInfluence}
\end{figure}

Therefore, {\sc Demographic Influence} simultaneously generalizes
{\sc Node-Kayles} and {\sc Demi-Quantum Nim} (which in turn generalizes the classical {\sc Nim}, {\sc Avoid True}, and {\sc Transverse Wave}).

We can also establish that {\sc Demographic Influence} Generalizes {\sc Friend Circle} defined in the earlier section.

\begin{theorem}[{\sc Demographic Influence} Generalizes {\sc Friend Circle}]
{\sc Demographic Influence} contains  {\sc Friend Circle} as a special case.
\end{theorem}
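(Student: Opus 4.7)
The plan is a polynomial-time reduction that turns any \ruleset{Friend Circle} position $(G=(V,E),S,w)$ into an isomorphic \ruleset{Demographic Influence} position $Z' = (G',\Theta,D)$. Introduce one DI-vertex $n_e$ for each FC-edge $e \in E$; set $\Theta(n_e)=1$ if $w(e)=\fcFalse$ and $\Theta(n_e)=0$ if $w(e)=\fcTrue$; let $G'$ be the line graph of $G$ (so $n_e \sim n_{e'}$ iff $e$ and $e'$ share an endpoint); and for each seed $s \in S$ take the demographic $D_s = \{n_e : e \ni s\}$. Because thresholds start in $\{0,1\}$, subtraction only decreases them, and the cascade rule only ever writes the value $-1$, no threshold ever exceeds $1$. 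Therefore every legal DI move in $Z'$ must use $c=1$, giving a bijection between DI moves and FC choices of seeds $s \in S$.

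The substantive content is the invariant
$$w(e) = \fcTrue \iff \Theta(n_e) \leq 0,$$
which I would prove by induction on tandem play. The base case is immediate from the construction. For the inductive step, the FC player picks $v \in S$ and the DI player picks $(D_v,1)$. Direct subtraction pushes every $n_e \in D_v$ from $\{0,1\}$ to $\{-1,0\}$, matching the FC rule that every edge of $v$ becomes \fcTrue. Whenever a specific $n_{(v,x)}$ transitions from $0$ to $-1$ (which is precisely the case that $(v,x)$ was already \fcTrue), the DI cascade writes $-1$ onto every DI-neighbor of $n_{(v,x)}$; by the line-graph definition these neighbors are exactly the $n_{e'}$ with $e'$ incident to $v$ or $x$, so all edges of $x$ drop to $\leq 0$, matching the FC cascade that turns every edge of $x$ \fcTrue.

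The main obstacle is the semantic asymmetry of the two cascade triggers: FC cascades through \emph{every} \fcTrue edge of $v$ when $v$ is picked, whereas DI only cascades when a threshold crosses downward across $0$. So if some $n_{(v,x)}$ already sits at a value $\leq -1$ (from an earlier cascade), picking $D_v$ decrements it further but fires no new cascade, even though the FC cascade through $(v,x)$ would still fire. To handle this I would prove an auxiliary claim: whenever $\Theta(n_{(v,x)}) \leq -1$ and $D_v$ is still feasible, every edge of $x$ is already at threshold $\leq -1$. The claim follows from case analysis on how $n_{(v,x)}$ could have reached $\leq -1$ — direct subtraction when $D_v$ or $D_x$ was picked, or cascade from an edge sharing endpoint $v$ or $x$. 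In every subcase compatible with $D_v$ remaining feasible afterwards, the move that pushed $n_{(v,x)}$ below zero also fired a cascade covering all edges of $x$ with $-1$. Hence the "missing" DI cascade would only overwrite values already $\leq -1$ with $-1$, so the invariant is preserved.

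Combining the move-bijection with the invariant (which also forces the feasible-move sets of the two games to agree at every step) yields a game-tree isomorphism between $(G,S,w)$ and $Z'$, completing the reduction and establishing that \ruleset{Demographic Influence} contains \ruleset{Friend Circle} as a special case.
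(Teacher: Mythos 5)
Your reduction is the same as the paper's: \ruleset{Demographic Influence} is played on the line graph of $G$, with threshold $1$ on each \fcFalse\ edge and $0$ on each \fcTrue\ edge, demographics $D_s = \{n_e : e \text{ incident to } s\}$ for each seed $s \in S$, the move correspondence $s \leftrightarrow (D_s,1)$, and the invariant $w(e) = \fcTrue \iff \Theta(n_e) \leq 0$. The genuine difference is in the inductive step, and there your version is stronger. The paper's proof equates the cascade trigger in \ruleset{Demographic Influence} (``$\Theta(v_e)$ becomes $-1$'') with ``$w(e)$ was previously \fcTrue,'' which is only correct when every true edge incident to the chosen seed sits at threshold exactly $0$; it never addresses the situation you isolate, in which $n_{(v,x)}$ already sits at some value $\leq -1$ from an earlier cascade, so that \ruleset{Friend Circle} re-fires its cascade through $(v,x)$ while \ruleset{Demographic Influence} fires nothing. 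Your auxiliary claim (if $\Theta(n_{(v,x)}) \leq -1$ and $D_v$ is still feasible, then every edge incident to $x$ is already at threshold $\leq -1$) is exactly what is needed to close that gap, and your case analysis is sound: the move that first drove $n_{(v,x)}$ negative either fired a cascade from $n_{(v,x)}$ itself or from an edge sharing the endpoint $x$, which stamps $-1$ on all edges incident to $x$, or fired one from an edge sharing the endpoint $v$, which stamps $-1$ on all edges incident to $v$ and thereby destroys the feasibility of $D_v$ permanently; monotonicity (a negative threshold never returns to $\geq 0$, and a demographic's feasibility, once lost, is never regained) then carries the claim forward to all later positions. Your explicit observation that $c=1$ is the only legal investment is likewise implicit but unargued in the paper. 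In short, you prove the theorem by the same reduction, but your execution patches a real hole that the paper's induction, read literally, leaves open; the only blemish is the early sentence claiming subtraction pushes every $n_e \in D_v$ from $\{0,1\}$ to $\{-1,0\}$, which ignores already-negative thresholds, but your later analysis handles precisely that case.
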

\begin{proof}
For a {\sc Friend Circle} position, $Z$ = $(G,S,w)$, where $G = (V,E)$,
we construct the following {\sc Demographic Influence} 
instance  $Z'= (G', \Theta, D)$ as the following:
\begin{itemize}
  \item For each edge $e\in E$, we create a new vertex $v_e$. Then $V' = \{v_e\ |\ e \in E\}$
  \item $E' = \{(v_{e_1}, v_{e_2})\ |\ \exists v \in V: e_1, e_2 \text{ both incident to } v \}$
  \item $G' = (V', E')$
  \item $\Theta: V' \rightarrow \{0, 1\}$, where if $w(e) = \fcFalse$, then $\Theta(e)=1$; otherwise, if  $w(e) = \fcTrue$, 
then $\Theta(e) = 0$. 
  \item $D = \{D_s\}_{s \in S}$, where $D_s = \{ v_e\ |\ e \text{ is incident to } s \}$

\end{itemize}
In other words, the connections
are built on the line graph 
of the underlying graph in {\sc Friend Circle}.
Each seed vertex defines the demographic group and associates with all
edges incident to it.
Targeting this demographic group
  influences all these edges and 
  edges adjacent to $\fcTrue$-edges 
  in this set.
  
We complete the proof by showing that a play on \ruleset{Friend Circle} position $s$ is isomorphic to a play on \ruleset{Demographic Influence} $(D_s, 1)$, meaning choosing demographic $D_s$ and investing $c = 1$.  In \ruleset{Friend Circle}, playing at $s$ means that $\forall e$ incident to $s:$ (1) $w(e)$ becomes \fcTrue, and (2) if $w(e)$ was already \fcTrue, then $\forall f$ adjacent to $e: w(f)$ becomes \fcTrue.

In \ruleset{Demographic Influence}, the corresponding play, $(D_s, 1)$ means that $\forall v_e \in D_s:$ \begin{itemize}
  \item $\Theta(v_e)$ is reduced by 1.  This corresponds to setting $w(e)$ to \fcTrue.
  \item If $\Theta(v_e)$ becomes -1, then $\forall v_f \in N_{G'}(v_e): \Theta(v_f)$ also becomes $-1$.  By our definition of $E'$, these $v_f$ are exactly those where both $w(e)$ was previously \fcTrue and $e$ and $f$ are adjacent in $G$.
\end{itemize}
Thus, following analagous moves, $w(e) =$  \fcTrue\ iff $\Theta(v_e) \leq 0$.  A seed vertex, $s' \in S$ is surrounded by \fcTrue edges (and inelligible as a move) exactly when all vertices $v_e \in D_s$ are influenced, also making $D_s$ inelligible as a move.  Thus, this mapping of moves shows that the games are isomorphic.

\end{proof}

See Figure \ref{fig:friendCircleToDemographicInfluence} for an example of the reduction.

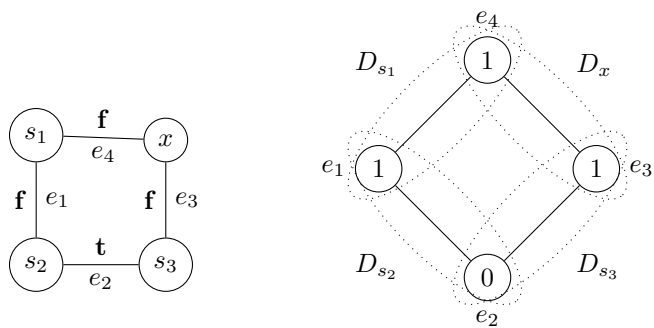
\begin{figure}[h!]
\begin{center}\begin{tikzpicture}
  \node[draw, circle] (s1) {$s_1$};
  \node[draw, circle] (s2) [below=of s1] {$s_2$};
  \node[draw, circle] (s3) [right=of s2] {$s_3$};
  \node[draw, circle] (x) [above=of s3] {$x$};
  
  \draw[-]
    (s1) edge node [left] {\fcFalse} node [right] {$e_1$} (s2)
    (s2) edge node [above] {\fcTrue} node [below] {$e_2$} (s3)
    (s3) edge node [left] {\fcFalse} node [right] {$e_3$} (x)
    (x) edge node [above] {\fcFalse} node [below] {$e_4$} (s1)
    ;
    
  \node[draw, circle] (e4) at (6, 1) [label=above:{$e_4$}] {1};
  \node[draw, circle] (e1) [below left=of e4, label=left:{$e_1$}] {1};
  \node[draw, circle] (e2) [below right=of e1, label=below:{$e_2$}] {0};
  \node[draw, circle] (e3) [above right=of e2, label=right:{$e_3$}] {1};
  
  \draw[-]
    (e1) edge (e2)
    (e2) edge (e3)
    (e3) edge (e4)
    (e4) edge (e1)
  ;
  
  \node[draw, ellipse, minimum height=1cm, minimum width=3.1cm, dotted, rotate=-45] at (5.3, -1.1) [label=below:{$D_{s_2}$}] {};
  \node[draw, ellipse, minimum height=1cm, minimum width=3.1cm, dotted, rotate=45] at (5.3, .3) [label=above:{$D_{s_1}$}] {};
  \node[draw, ellipse, minimum height=1cm, minimum width=3.1cm, dotted, rotate=-45] at (6.7, .3) [label=above:{$D_x$}] {};
  \node[draw, ellipse, minimum height=1cm, minimum width=3.1cm, dotted, rotate=45] at (6.7, -1.1) [label=below:{$D_{s_3}$}] {};
    
\end{tikzpicture}\end{center}

\caption{Example of the reduction.  On the left is a \ruleset{Friend Circle} instance. On the right is the resulting \ruleset{Demographic Influence} position.}
\label{fig:friendCircleToDemographicInfluence}
\end{figure}

\section{Conclusions and Future Work}

One of the beautiful aspects of Winning Ways \cite{WinningWays:2001} is the relationships between games, especially when positions in one ruleset can be transformed into equivalent instances of another ruleset.  As examples, \ruleset{Dawson's Chess} positions are equivalent to \ruleset{Node Kayles} positions on paths, \ruleset{Wythoff's Nim} is the one-queen case of \ruleset{Wyt Queens}, and \ruleset{Subtraction}-$\{1, 2, 3, 4\}$ positions exist as instances of many rulesets, including \ruleset{Adders and Ladders} with one token after the top of the last ladder and last snake.

Transforming instances of one ruleset to another (reductions) is a basic part of Combinatorial Game Theory\footnote{"Change the Game!" is the title of a section in Lessons in Play, Chapter 1, "Basic Techniques" \cite{LessonsInPlay:2007}.}, just as it is vital to computational complexity.  \ruleset{Transverse Wave} arose not only by reducing to other things, but more concretely as a special case of other games we explored. 

"Ruleset $A$ is a special case of ruleset $B$" (i.e. "$B$ is a generalization of $A$") not only proves that computational hardness of $A$ results in the computational hardness of $B$, but also:

\begin{itemize}
  \item If $A$ is interesting, then it is a fundamental part of $B$'s strategies, and
  \item If $B$'s rules are straightforward, then $A$ could be a block to create other fun rulesets.
\end{itemize}

Relevant special-case/generalization relationships between games presented here include:
\begin{itemize}
  \item \ruleset{Demi-Quantum Nim} is a generalization of \ruleset{Nim}. (See Section \ref{Sec:Quantum}.)  (This is true of any ruleset $R$ and \ruleset{Demi-Quantum $R$}.)
  \item \ruleset{Demi-Quantum Nim} is a generalization of \ruleset{Transverse Wave}. (Via \ruleset{Demi-Quantum Boolean Nim}, see section \ref{Sec:DQNTW}.)
  \item \ruleset{Friend Circle} is a generalization of both \ruleset{Transverse Wave} (Section \ref{Sec:TWFC}) and 
  \ruleset{Node Kayles} (Section \ref{Sec:NKFC}).
  \item \ruleset{Demographic Influence} is a generalization of both \ruleset{Demi-Quantum-Nim}  and \ruleset{Friend Circle} (Section \ref{Sec:DI}).
\end{itemize}

We show these relationships in a lattice-manner in Figure \ref{fig:generalizationLattice}.  Understanding \ruleset{Transverse Wave} is a key piece of the two other rulesets, which also include the impartial classics \ruleset{Nim} and \ruleset{Node Kayles}.

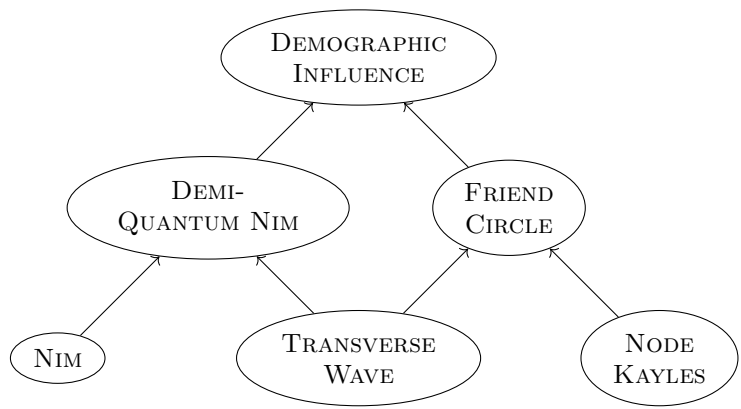
\begin{figure}[h!]
\begin{center}\begin{tikzpicture}[node distance = 1cm]

  \node[draw, ellipse] (nim) {\ruleset{Nim}};
  \node[draw, ellipse, align=center] (twave) at (4, 0) {\ruleset{Transverse}\\ \ruleset{Wave}};
  \node[draw, ellipse, align=center] (nodeKayles) at (8,0) {\ruleset{Node}\\ \ruleset{Kayles}};
  
  \node[draw, ellipse, align=center] (dqNim) at (2,2) {\ruleset{Demi-}\\ \ruleset{Quantum Nim}}; 
  \node[draw, ellipse, align=center] (friendCircle) at (6,2) {\ruleset{Friend}\\ \ruleset{Circle}};
  
  \node[draw, ellipse, align=center] (demographicInfluence) at (4, 4) {\ruleset{Demographic}\\ \ruleset{Influence}};
  
  \draw[->]
      (nim) edge (dqNim)
      (twave) edge (dqNim)
      (twave) edge (friendCircle)
      (nodeKayles) edge (friendCircle)
      (dqNim) edge (demographicInfluence)
      (friendCircle) edge (demographicInfluence)
      ;
\end{tikzpicture}\end{center}
\caption{Generalization relationships of the rulesets in this paper.  $A \rightarrow B$ means that $A$ is a special case of $B$ and $B$ is a generalization of $A$.}
\label{fig:generalizationLattice}
\end{figure}

Furthermore, several of the relationships outside Figure \ref{fig:generalizationLattice} that were discussed in this paper were completely isomorphic, preserving the 
the game values not just winnability (as in Section \ref{CGT Values}). More explicitly, any new findings on the 
Grundy values for \ruleset{Transverse Wave} also give those exact same results for \ruleset{Crosswise OR}, \ruleset{Crosswise AND}, \ruleset{Demi-Quantum Boolean Nim}, \ruleset{Avoid True}, \ruleset{Rechargable Bipartite Boolean Nim}, and \ruleset{Rechargeable Hypergraph Boolean Nim}.

We have been drawn to {\sc Transverse Wave} not only because it is colorful, approachable, and intriguing, but also because its relationships with other games have inspired us to discover more connections among games. 
Our work offers us a glimpse of 
 the \textit{lattice order} induced by \textit{special-case/generalization relationships} over mathematical games, 
 which we believe 
 is an instrumental framework for both the design and comparative analysis of combinatorial games. 
In one direction of this lattice, 
  when given two combinatorial games $A$ and $B$, 
  it is a stimulating and creative process to design a game with the simplest ruleset that generalizes both $A$ and $B$.\footnote{It is also a relevant pedagogical question to ask when introducing students to combinatorial game theory.}
For example, in generalizing both {\sc Nim} and {\sc Undirected Geography}, {\sc Neighboring Nim} highlights the role of 
``self-loop'' in {\sc Graph-Nim}.
In our work, the aim to capture both {\sc Node Kalyes} and {\sc Demi-Quantum Nim} has contributed to our design of {\sc Demographic Influence}.
In the other direction, identifying a well-formulated basic game at the intersection of two seemingly unconnected games may greatly expand our understanding of game structures. 
It is also a refinement process for identifying intrinsic building blocks and fundamental games.
By exploring the lattice order of game relationships,
we will continue to 
improve our understanding of combinatorial game theory and identify new fundamental games inspired by the rapidly evolving world of data, network, and computing.

\bibliographystyle{plain} 
\bibliography{paithan} 


\end{document}